\numberwithin{equation}{section} 
\theoremstyle{plain}
\newtheorem{prop}{\indent Proposition}
\theoremstyle{definition}
\theoremstyle{remark}
\newtheorem*{astep}{\indent A-step}
\newtheorem*{pstep}{\indent P-step}
\newtheorem*{cstep}{\indent C-step}
\newcommand{\prob}{\mathsf{P}}
\newcommand{\bel}{\mathsf{bel}}
\newcommand{\pl}{\mathsf{pl}}
\newcommand{\pois}{{\sf Pois}}
\newcommand{\unif}{{\sf Unif}}
\newcommand{\nm}{{\sf N}}
\newcommand{\RR}{\mathbb{R}}
\newcommand{\XX}{\mathbb{X}}
\newcommand{\UU}{\mathbb{U}}
\newcommand{\TT}{\mathbb{T}}
\renewcommand{\S}{\mathcal{S}}
\renewcommand{\SS}{\mathbb{S}}
\newcommand{\del}{\partial}
\renewcommand{\phi}{\varphi} 
\newcommand{\eps}{\varepsilon}
\title{Optimal inferential models for a Poisson mean}
\author{
Ryan Martin \\
Department of Mathematics, Statistics, and Computer Science \\
University of Illinois at Chicago \\
{\tt rgmartin@math.uic.edu} \\
\mbox{} \\
Duncan Ermini Leaf \mbox{} and \mbox{} Chuanhai Liu \\
Department of Statistics \\
Purdue University \\
{\tt $\{$dleaf,\,chuanhai$\}$@stat.purdue.edu}
}
\date{\today}
\begin{document}

\maketitle 

\begin{abstract}  
Statistical inference on the mean of a Poisson distribution is a fundamentally important problem with modern applications in, e.g., particle physics.  The discreteness of the Poisson distribution makes this problem surprisingly challenging, even in the large-sample case.  Here we propose a new approach, based on the recently developed framework of \emph{inferential models} (IMs).  Specifically, we construct optimal, or at least approximately optimal, IMs for two important classes of assertions/hypotheses about the Poisson mean.  For point assertions, we develop a novel recursive sorting algorithm to construct this optimal IM.  Numerical comparisons of the proposed method to existing methods are given, for both the mean and the more challenging mean-plus-background problem.   

\smallskip

\emph{Keywords and phrases:} Belief function; constraint; plausibility function; predictive random set; recursive ordering; score function; validity.
\end{abstract}

\section{Introduction}
\label{S:intro}

Statistical inference based on discrete data, in particular, Poisson counts, is a fundamentally important and counterintuitively challenging problem.   For example, modern inference problems in high-energy physics involve Poisson count data, and the combination of discreteness, small sample size, and occasional parameter constraints cause trouble for classical frequentist methods; see \citet{mandelkern2002}, \citet{bcd2003}, and the references therein.  Bayesian methods, popular in part for their conceptual and computational simplicity, also suffer in such problems because, in addition to the uncertain choice of prior, the inferential output generally is not calibrated for easy interpretation by users.  So, these kinds of challenging problems apparently require new ways of handling uncertainty.  In this paper, we apply the recently developed framework of \emph{inferential models} (IMs) to this problem of inference on a Poisson mean.  

The primary goal of statistical inference is the conversion of experience, in the form of observed data, into scientific knowledge.  But in order for a consensus to ultimately be reached, it is desirable that the inferential output, i.e., measures of uncertainty about the truthfulness of any assertion/hypothesis of interest, be meaningful both within and across experiments.  
\begin{itemize}
\item[I.] \emph{Meaningfulness within an experiment}.  The inferential output should depend on the observed data in a logical and meaningful way.  For example, Bayesian posterior probabilities or p-values can, in principle, be plotted as functions of observed data, and sense can be made out of the relationships revealed in this plot; e.g., a hypothesis is more plausible for one data value than for another.  On the other hand, frequentist hypothesis testing procedures, and the conclusions reached by them, are justified based Type~I and Type~II error rates, which are calculated pre-data and, therefore, meaningless in the given problem.   
\vspace{-2mm}
\item[II.] \emph{Meaningfulness across experiments}.  Inferential outputs should be suitably calibrated so that, if many similar experiments are conducted at different times or places, then the data-dependent measure of support for a true (resp.~false) assertion should be large (resp.~small) for a majority of the experiments, where ``large/small'' and ``majority'' have mathematical definitions available pre-experiment.  The language of frequentist error rates can be used to describe such properties, but it is not the frequentist properties themselves that are important, but rather the interpretability of the inferential results that is derived from them.  
\end{itemize}

As mentioned above, frequentist methods generally fail to satisfy Property~I.  In discrete data problems, such as Poisson, frequentist methods also tend to violate Property~II: typically large-sample approximations are used, which may not be appropriate in applications, and extreme care must be taken even if they are appropriate \citep{bcd2003}.  Bayesian methods satisfy Property~I, but without a carefully chosen reference prior, there are no guarantees that Property~II can be satisfied.  Other methods for probabilistic inference are available, namely, Fisher's fiducial inference \citep{fisher1973, zabell1992}, its variants \citep{hannig2009}, and Dempster--Shafer theory \citep{dempster2008, shafer1976}.  These methods generally produce output which is meaningful in the sense of Property~I.  However, to be meaningful, fiducial probabilities must be interpreted subjectively and, therefore, do not generally satisfy the calibration in Property~II.  

The IM framework of \citet{imbasics} was built upon ideas first laid out in \citet{mzl2010} and \citet{zl2010}.  The term ``inferential model'' reflects the understanding that an inferential method satisfying both Properties~I and II generally requires something more than fiducial's ``continue to regard'' \citep{dempster1963} strategy.  \citet{imbasics} develop a general and relatively simple three-step construction of an IM.  The details of this construction are reviewed in Section~\ref{S:review}.  As a result of this careful reasoning with uncertainty, the IM framework identifies and corrects the inherent bias in Fisher's fiducial inference.  Moreover, under very mild conditions, this IM output is shown to satisfy both desirable Properties~I and II.  

In this paper we specialize the general IM framework to the important Poisson problem, extending the naive analysis of this problem in \citet{imbasics} in two directions.  After a brief introduction to the basic IM construction and theoretical properties in Section~\ref{S:review}, we present results on optimal IM construction for two important classes of assertions/hypotheses about the Poisson mean, namely, one- and two-sided assertions.  Section~\ref{S:one.sided} establishes a simple result on the optimal IM for one-sided assertions.  The more challenging class of two-sided assertions is considered in Section~\ref{S:two.sided}.  There we develop first some intuitions about the optimal IM construction, and then propose a novel recursive algorithm for construction of an (approximately) optimal IM for two-sided assertions, which translates directly to interval estimates for the Poisson mean.  Our second contribution is an extenstion to the problem where non-stochastic constraint information about the Poisson mean is available, in addition to the observed data.  This constrained Poisson mean problem has applications in high-energy physics, where signal counts cannot be directly distinguished from background noise.  Numerical comparisons in Section~\ref{SS:background} show that the proposed method compares favorably to existing methods in terms of a variety of frequentist criteria.  However, it is important to keep in mind that IMs are more than just a tool to construct frequentist procedures: IMs produce prior-free posterior probabilistic inference, exactly what Fisher's fiducial inference was designed to achieve.

\section{Brief review of IMs}
\label{S:review}

\subsection{Definitions and basic construction}
\label{SS:basics}

Building on ideas in \citet{mzl2010} and \citet{zl2010}, \citet{imbasics} presented a general framework of prior-free, posterior probabilistic inference based on what are called inferential models (IMs).  To fix notation, let $X$ be the observable data, taking values in a space $\XX$, and let $\theta$ be the parameter of interest, taking values in the parameter space $\Theta$.  Given the application we have in mind here, we shall assume $\Theta$ and $\XX$ are subsets of $\RR$.  The starting point of the IM framework is similar to that of fiducial, in the sense that an auxiliary variable, denoted by $U$ and taking values in a space $\UU$ with probability measure $\prob_U$, is associated with $X$ and $\theta$.  It is this association, together with the distribution $U \sim \prob_U$, which characterizes the sampling distribution $X \sim \prob_{X|\theta}$.  After observing $X=x$, the fiducial/Dempster--Shafer approach is to ``continue to regard'' \citep{dempster1963} $U$ as a sample from $\prob_U$, and then invert the association to get a corresponding fiducial posterior distribution for $\theta$, given $X=x$.  

The IM approach takes a different perspective.  That is, instead of keeping the interpretation of $U$ as a random variable, the IM approach treats the unobserved value $u^\star$ of $U$, which is tied to the observed data $X=x$ and the \emph{true value} of $\theta$, as the fundamental quantity.  Then the goal is to predict this unobserved value $u^\star$ with a random set.  It turns out that the success of the IM framework rests on the choice of this predictive random set, described in more detail next.  

Start with a collection $\SS = \{S_t: t \in \TT\}$ of $\prob_U$-measurable subsets of $\UU$, indexed by some generic space $\TT$.  This collection will serve as the support of the predictive random set.  \citet{imbasics} showed that, for optimal predictive random sets, it suffices to assume that the collection $\SS$ is nested in the sense that either $S_t \subseteq S_{t'}$ or $S_{t'} \subseteq S_t$ for all pairs $t,t' \in \TT$.  We can define now define the predictive random set $\S$, supported on $\SS$, with ``distribution function'' $\prob_\S\{\S \subseteq S\} = \prob_U(S)$, for $S \in \SS$, what we call the natural measure.  Any predictive random set constructed in this way is \emph{admissible}; the name ``admissible'' is based on the result \citep[][Theorem~3]{imbasics} that for any predictive random set, there is one in this admissible class that is as good or better.  Therefore, without loss of efficiency, we may restrict attention to predictive random sets with nested supports equipped with the natural measure.  

The following three steps, described in \citet{imbasics}, define an IM:

\begin{astep}
Associate $X$, $\theta$, and $U \sim \prob_U$ in a way consistent with the sampling distribution $X \sim \prob_{X|\theta}$ such that for all $x \in \XX$ and all $u \in \UU$, it defines a unique subset $\Theta_x(u) \subseteq \Theta$, possibly empty, containing all possible candidate values of $\theta$ given $(x,u)$.  
\end{astep}

\begin{pstep}
Predict the unobserved value $u^\star$ of $U$ associated with the observed data by an admissible  predictive random set $\S$.  
\end{pstep}

\begin{cstep}
Combine $\S$ and the association $\Theta_x(u)$ specified in the A-step to obtain 
\begin{equation}
\label{eq:new.focal}
\Theta_x(\S) = \bigcup_{u \in \S} \Theta_x(u).
\end{equation}
Then compute the \emph{belief function}
\begin{equation}
\label{eq:belief}
\bel_x(A; \S) = \prob_\S\{\Theta_x(\S) \subseteq A\}, 
\end{equation}
where $A \subseteq \Theta$ is the assertion/hypothesis about $\theta$ of interest.  
\end{cstep}

%The use of the larger predictive random set $\S^+$ in \eqref{eq:new.focal}, due to \citet{leafliu2012}, provides a way of dealing with so-called ``conflict cases,'' i.e., cases where there are no $\theta$ values consistent with a particular $x$ and a draw of $\S$.  Such cases did not arise in the examples in \citet{imbasics}, but this adjustment is often needed in constrained parameter problems; see Section~\ref{SS:background}.  

The belief function is just one part of the inferential output.  Since the belief function $\bel_x(A; \S)$ is sub-additive, i.e., $\bel_x(A;\S) + \bel_x(A^c;\S) \leq 1$, one actually needs both $\bel_x(A;\S)$ and $\bel_x(A^c;\S)$ to summarize the information in $x$ concerning the truthfulness of assertion $A$.  In some cases, it is more convenient to report the \emph{plausibility function} 
\begin{equation}
\label{eq:plausibility}
\pl_x(A;\S) = \prob_\S\{\Theta_x(\S) \cap A \neq \varnothing\} = 1-\bel_x(A^c; \S).
\end{equation}
Often, Monte Carlo methods are required to evaluate the belief/plausibility functions.  Also note that it is not necessary to have the same predictive random set for each of $A$ and $A^c$.  In fact, for optimal inference, \citet{imbasics} recommend using different predictive random sets for each point in $\Theta$; see Section~\ref{S:two.sided}.

\subsection{Validity and optimality}
\label{SS:validity}

The performance of a particular predictive random set is measured through the sampling behavior of the corresponding belief function, as a function of $X \sim \prob_{X|\theta}$, at a given assertion $A$.  In particular, the IM is said to be \emph{valid at $A$} if 
\begin{equation}
\label{eq:valid}
\sup_{\theta \in A^c} \prob_{X|\theta}\{\bel_X(A;\S) \geq 1-\alpha\} \leq \alpha, \quad \alpha \in (0,1), 
\end{equation}
or, in other words, $\bel_X(A;\S)$ is stochastically no larger than $\unif(0,1)$ when $X \sim \prob_{X|\theta}$ with $\theta \not\in A$.  This validity property is a mathematical description of Property~II in Section~\ref{S:intro}.  That is, if $A$ is \emph{false}, then the amount of support in data $X$ for $A$ will be large only for a relatively small proportion of $X$ values.  \citet[][Theorem~1]{imbasics} show that this validity property is easy to arrange: it holds whenever the predictive random set $\S$ is admissible in the sense described above.  

As a consequence of the validity theorem, one can use the IM output--belief and plausibility functions---to construct frequentist decision procedures.  For example, in a testing problem, $H_0: \theta \in A$ versus $H_1: \theta \not\in A$, the testing rule 
\begin{equation}
\label{eq:imtest}
\text{reject $H_0$ based on $X=x$ iff $\pl_x(A;\S) \leq \alpha$}
\end{equation}
controls the frequentist Type~I error rate at the nominal $\alpha$ level.  One can also construct a $100(1-\alpha)$\% plausibility region for the unknown parameter by inverting this test,
\[ \Pi_x(\alpha) = \{\theta: \pl_x(\theta; \S) > \alpha\}. \]
This plausibility region also has nominal frequentist coverage probability; see \citet{imbasics} for details.  But we should emphasize here that, although plausibility functions can be used to construct frequentist procedures, they can also do much more.  Indeed, the belief and plausibility functions provide meaningful prior-free posterior probabilistic evidence for the truthfulness of the claim ``$\theta \in A$.''  In particular, any $\theta' \not\in \Pi_x(\alpha)$ is a relatively implausible value for the true $\theta$ after observing $X=x$.  Confidence/credible intervals simply do not have this sharp of an interpretation.   

Herein we focus only on IMs that are valid in the sense of \eqref{eq:valid}.  In that case, $\bel_X(A;\S)$, as a function of $X$, is (probabilistically) not too large when $A$ is false.  Towards optimality, we want $\bel_X(A;\S)$ as large as possible without violating the validity condition.  For this, a non-trivial upper bound on the belief function will be helpful.  Given $A$, define a class of subsets of $\UU$ indexed by $x \in \XX$:
\begin{equation}
\label{eq:a.event}
\UU_x(A) = \{u \in \UU: \Theta_x(u) \subseteq A\}.
\end{equation}
In words, $\UU_x(A)$ contains all those $u$ such that, given $x$, the corresponding $\theta$ values all agree with the assertion $A$.  It can be shown that $\prob_U\{\UU_x(A)\}$ is the fiducial/Dempster--Shafer posterior probability for $A$, given data $x$.  This fiducial probability can also be written as an IM belief function, i.e., 
\begin{equation}
\label{eq:fiducial}
\prob_U\{\UU_x(A)\} = \bel_x(A;\S_0), \quad \text{where $\S_0 = \{U\}$, $U \sim \prob_U$}.  
\end{equation}
\citet[][Proposition~1]{imbasics} show that, for any admissible predictive random set $\S$, $\bel_x(A;\S)$ is bounded above by $\prob_U\{\UU_x(A)\}$ for all $x$.  If it happens that $\{\UU_x(A): x \in \XX\}$ is nested, then an admissible predictive random set $\S^\star$ exists such that the upper bound is attained, i.e., $\bel_x(A;\S^\star) = \bel_x(A;\S_0)$ for all $x$.  In this case, we say that the IM corresponding to $\S^\star$ is optimal.  We summarize this result as follows.  

\begin{prop}
\label{prop:optimal}
Given an assertion $A$, suppose that $\{\UU_x(A):x \in \XX\}$ defined in \eqref{eq:a.event} forms a nested collection of sets.  Then there exists an admissible predictive random set $\S^\star$ such that $\bel_x(A;\S^\star) = \bel_x(A;\S_0)$ for all $x$.  
\end{prop}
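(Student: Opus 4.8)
The plan is to construct $\S^\star$ explicitly by taking its support to be the collection $\{\UU_x(A) : x \in \XX\}$ itself. The hypothesis that this collection is nested is exactly the condition needed for it to serve as the support of an admissible predictive random set, so I would set $\SS^\star = \{\UU_x(A): x \in \XX\}$ and equip it with the natural measure $\prob_{\S^\star}\{\S^\star \subseteq S\} = \prob_U(S)$ for $S \in \SS^\star$. By the general construction reviewed in Section~\ref{SS:basics}, nestedness guarantees that this yields a genuine admissible predictive random set; monotonicity of $\prob_U$ along the nesting ensures the natural measure is a well-defined distribution function.

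The crux of the argument is a set-theoretic equivalence linking the C-step combination to the membership sets $\UU_x(A)$. For any $S \subseteq \UU$ and any fixed $x$, I would observe that $\Theta_x(S) = \bigcup_{u \in S}\Theta_x(u) \subseteq A$ holds if and only if $\Theta_x(u) \subseteq A$ for every $u \in S$, which by the very definition \eqref{eq:a.event} of $\UU_x(A)$ is equivalent to $S \subseteq \UU_x(A)$. Applying this with $S = \S^\star$ turns the belief function \eqref{eq:belief} into $\bel_x(A;\S^\star) = \prob_{\S^\star}\{\S^\star \subseteq \UU_x(A)\}$.

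To finish, I would invoke the natural measure. Since $\UU_x(A)$ is itself a member of the support $\SS^\star$, the defining relation of the natural measure gives $\prob_{\S^\star}\{\S^\star \subseteq \UU_x(A)\} = \prob_U(\UU_x(A))$, and this equals $\bel_x(A;\S_0)$ by \eqref{eq:fiducial}. Combining with the previous step yields $\bel_x(A;\S^\star) = \bel_x(A;\S_0)$ for every $x \in \XX$, as desired. This simultaneously recovers the upper bound of \citet[][Proposition~1]{imbasics}, confirming optimality.

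The main obstacle, and the place where the hypothesis does all the work, is establishing that $\SS^\star$ is a legitimate admissible support: without nestedness of $\{\UU_x(A): x \in \XX\}$ one cannot place a consistent natural measure on these sets simultaneously, and the attainment of the bound would fail. The set-theoretic equivalence and the final measure computation are then essentially automatic. I would also want to check the minor regularity point that each $\UU_x(A)$ is $\prob_U$-measurable, which is needed for the natural measure to be assigned, but this is a standard assumption on the association built in the A-step rather than a substantive difficulty.
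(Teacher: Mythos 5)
Your proposal is correct and follows essentially the same route as the paper's proof: take the support to be $\{\UU_x(A):x\in\XX\}$ itself, equip it with the natural measure (admissible by nestedness), and compute $\bel_x(A;\S^\star)=\prob_{\S^\star}\{\S^\star\subseteq\UU_x(A)\}=\prob_U\{\UU_x(A)\}=\bel_x(A;\S_0)$. The only difference is that you make explicit the set-theoretic equivalence $\Theta_x(S)\subseteq A \iff S\subseteq\UU_x(A)$, which the paper leaves implicit.
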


\begin{proof}[\indent Proof]
Take the index set $\TT = \XX$ and define the support $\SS = \{\UU_x(A): x \in \XX\}$.  This collection is nested by hypothesis.  Take $\S^\star$ to be the predictive random set determined by the natural measure as in \eqref{eq:natural.measure}.  Then $\S^\star$ is admissible.  Furthermore, 
\[ \bel_x(A;\S^\star) = \prob_{\S^\star}\{\Theta_x(\S^\star) \subseteq A\} = \prob_{\S^\star}\{\S^\star \subseteq \UU_x(A)\} = \prob_U\{\UU_x(A)\}. \]
Since the right-hand side equals $\bel_x(A;\S_0)$, the claim follows.  
\end{proof}

Proposition~\ref{prop:optimal} resolves this issue of optimal IMs in problems where $\{\UU_x(A):x \in \XX\}$, is nested; see Section~\ref{S:one.sided}.  However, in other cases, like in Section~\ref{S:two.sided}, these sets are not nested so further considerations are needed.  \citet{imbasics} develop an theory of optimal IMs for continuous data models, and steps towards optimality in the discrete Poisson data problem are discussed in Section~\ref{S:two.sided}.

\section{Poisson inference for one-sided assertions}
\label{S:one.sided}

\subsection{A simple Poisson association}

For the Poisson model, $X \sim \pois(\theta)$, the probability mass function is $f_\theta(x) = e^{-\theta} \theta^x/x!$, $x=0,1,2,\ldots$, and the distribution function $F_\theta(x)$ satisfies 
\[ F_\theta(x) = 1-G_{x+1}(\theta), \quad x=0,1,2,\ldots, \quad \theta > 0, \]
where $G_a$ is the gamma distribution function with scale parameter $a$ and rate parameter unity.  Following \citet{imbasics}, we introduce $U \sim \prob_U = \unif(0,1)$, and define the association between data $X$, parameter $\theta$, and auxiliary variable $U$ as 
\begin{equation}
\label{eq:poisson.association}
F_\theta(X-1) \leq 1-U < F_\theta(X), \quad U \sim \unif(0,1).
\end{equation}
It is clear that this association characterizes the posited Poisson sampling model; this is the familiar recipe for simulation from the Poisson distribution.  Using the connection between the Poisson and gamma distribution functions, we can rewrite \eqref{eq:poisson.association}, for generic $(x,\theta,u)$, as $G_{x+1}(\theta) < u \leq G_x(\theta)$, and, by inversion, we have
\begin{equation}
\label{eq:pois.focal1}
\Theta_x(u) = \bigl[ G_x^{-1}(u), G_{x+1}^{-1}(u) \bigr),  
\end{equation}
the set of all candidate $\theta$'s, given $(x,u)$.

\subsection{Optimal IMs}

Let $\theta_0 > 0$ be an arbitrary but fixed value, and consider the assertion $A = (\theta_0, \infty)$.  This assertion is ``one-sided'' in the same sense that the alternative hypothesis $H_1: \theta > \theta_0$ in the classical testing context is one-sided.  In this case, using \eqref{eq:pois.focal1}, the sets $\UU_x(A)$ defined in \eqref{eq:a.event} are given by 
\[ \UU_x(A) = \{u: G_x^{-1}(u) > \theta_0\} = \{u: u > G_x(\theta_0)\} = \{u: u > 1 - F_{\theta_0}(x-1)\}. \]
Since $F_{\theta_0}(\cdot)$ is a non-decreasing function, it follows that $\UU_x(A) \subset \UU_{x'}(A)$ for non-negative integers $x < x'$.  Since these sets are nested, there is an optimal IM that can be obtained as in the proof of Proposition~\ref{prop:optimal}.  This optimal IM has belief function 
\[ \bel_x(A; \S_A^\star) = \prob_U\{\UU_x(A)\} = F_{\theta_0}(x-1), \quad \prob_U = \unif(0,1). \]
Here we use the notation $\S_A^\star$ to denote the predictive random set corresponding to the optimal IM for the assertion $A=(\theta_0,\infty)$.  

Now consider $A^c=(0,\theta_0]$, the alternate one-sided assertion.  Calculations similar to those displayed above shows that $\UU_x(A^c) = \{u: u \leq 1-F_{\theta_0}(x)\} = (0, 1-F_{\theta_0}(x)]$.  Since these again are nested, the optimal IM for $A^c$ has belief function
\[ \bel_x(A^c; \S_{A^c}^\star) = \prob_U\{\UU_x(A^c)\} = 1-F_{\theta_0}(x). \]

To summarize, for the one-sided assertion $A=(\theta_0,\infty)$, an optimal IM exists and can be found via Proposition~\ref{prop:optimal}.  Specifically, for a given $X=x$, the corresponding optimal belief and plausibility function pair is given by 
\[ \{\bel_x(A), \pl_x(A)\} = \{F_{\theta_0}(x-1), F_{\theta_0}(x)\}. \]
Some connections between the IM results and classical hypothesis testing are worth mentioning here.  First, observe that the plausibility function is exactly Fisher's p-value for testing the null hypothesis $H_0: \theta \in A$.  That is, the p-value can be interpreted as an upper bound on the belief probability that the null hypothesis is true.  Second, as described in \citet{imbasics}, an IM-based frequentist testing rule would reject $H_0: \theta \in A$ based on observed $X=x$ if the plausibility function $\pl_x(A)$ is too small, i.e., if $\pl_x(A) \leq \alpha$.  They show that such a testing rule controls the frequentist Type~I error at level $\alpha$.  But, in addition, if we ignore randomization issues, then this same rule with $\pl_x(A) = F_{\theta_0}(x)$, corresponds to the Neyman--Pearson most powerful test.

\section{Poisson inference for two-sided assertions}
\label{S:two.sided}

Consider a singleton assertion $A=\{\theta_0\}$ for some fixed $\theta_0 > 0$.  This corresponds to a point null hypothesis $H_0: \theta=\theta_0$ like in the classical setting.  It is well known that point nulls and, hence, singleton assertions are closely tied to the important problem of constructing confidence/plausibility intervals.  In this section we will focus our attention on the complement $A^c=\{\theta_0\}^c$, a so-called ``two-sided'' assertion.  

For this two-sided assertion, the sets $\UU_x(\{\theta_0\}^c)$ are 
\begin{align}
\UU_x(\{\theta_0\}^c) & = \{u: G_{x+1}^{-1}(u) \leq \theta_0\} \cup \{u: G_x^{-1}(u) > \theta_0\} \notag \\
& = \{u: u \leq G_{x+1}(\theta_0)\} \cup \{u: u > G_x(\theta_0)\} \notag \\
& = (0,1) \setminus (G_{x+1}(\theta_0), G_x(\theta_0)]. \label{eq:pois.aevent}
\end{align}
It is clear from the latter expression that $\UU_x(\{\theta_0\}^c)$ are not nested.  Therefore, Proposition~\ref{prop:optimal} does not help to identify an optimal IM---something more is needed.

\subsection{Nesting predictive random sets via intersections}
\label{SS:intersections}

Following the intuition developed in Proposition~\ref{prop:optimal}, we see that the use of the sets $\{\UU_x(\{\theta_0\}^c): x\in \XX\}$ is desirable.  But in order for the corresponding belief function to be valid, these sets need to be modified to make them nested.  One way this can be accomplished is by iteratively taking intersections, i.e., order the sets $\{\UU_{x_k}(\{\theta_0\}^c): k \geq 1\}$ and define $S_1 = \UU_{x_1}(\{\theta_0\}^c)$, $S_2 = \UU_{x_2}(\{\theta_0\}^c) \setminus S_1^c$, and so on.  The following two-step procedure describes this idea in more detail.   

\begin{enumerate}
\item Choose a ranking $\rho$ on $\XX$, i.e., an ordering of $\{\UU_x(\{\theta_0\}^c): x \in \XX\}$.  
\vspace{-2mm}
\item Let $\TT=\{1,2,\ldots\}$ and define $\SS_\rho = \{S_t^\rho: t \in \TT\}$ as follows.  Set $S_0 = \varnothing$ and  
\[ S_t^\rho = \bigcap_{x: \rho(x) > t} \UU_x(\{\theta_0\}^c) = \bigcup_{x: \rho(x) \leq t} (G_{x+1}(\theta_0), G_x(\theta_0)], \quad t=1,2,\ldots, \]
where the last equality follows from \eqref{eq:pois.aevent}.  
\end{enumerate}

For each $\rho$, the collection $\SS_\rho$ is nested, so if it is equipped with the natural measure \eqref{eq:natural.measure}, then we obtain an admissible predictive random set $\S_\rho$.  Since $S_{\rho(x)-1}^\rho$ is the largest of the $S_r^\rho$'s that is contained in $\UU_x(\{\theta_0\}^c)$, it follows that 
\begin{align*}
\bel_x(\{\theta_0\}^c; \S_\rho) & = \prob_U\{S_{\rho(x)-1}^\rho\} = \sum_{x': \rho(x') < \rho(x)} [G_{x'}(\theta_0)-G_{x'+1}(\theta_0) ] = \sum_{x': \rho(x') < \rho(x)} f_{\theta_0}(x'),
\end{align*}
and, consequently, the corresponding plausibility function is 
\[ \pl_x(\theta_0; \S_\rho) \equiv \pl_x(\{\theta_0\};\S_\rho) = 1-\sum_{x': \rho(x') < \rho(x)} f_{\theta_0}(x'). \]
It follows from the general theory that the IM based on $\S_\rho$ is valid for any ranking $\rho$.  Following \citet{imbasics}, the optimal $\rho$ is such that $\bel_X(\{\theta_0\}^c;\S_\rho)$ is largest (probabilistically) under $X \sim \pois(\theta)$, $\theta\neq\theta_0$.  %Here the fiducial probability---the belief function upper bound---is $\prob_U\{\UU_x(\{\theta_0\}^c)\} = 1-f_{\theta_0}(x)$.  

\subsection{Optimal ordering: some intuition}

Towards an optimal ordering, we consider the distribution of $\bel_X(\{\theta_0\}^c;\S_\rho)$ as a function of $X \sim \prob_{X|\theta} = \pois(\theta)$, for $\theta \neq \theta_0$.  Consider the event $\{\bel_X(\{\theta_0\}^c;\S_\rho) \leq \bel_x(\{\theta_0\}^c; \S_\rho)\}$, for a given $x \in \XX$.  Then the $\prob_{X|\theta}$-probability of this event is like the distribution function of $\bel_X(\{\theta_0\}^c; \S_\rho)$, i.e., 
\begin{equation}
\label{eq:dist.fun}
\psi_x(\theta) = \prob_{X|\theta}\{\bel_X(\{\theta_0\}^c; \S_\rho) \leq \bel_x(\{\theta_0\}^c; \S_\rho)\} = \sum_{x': \rho(x') < \rho(x)} f_\theta(x'), 
\end{equation}
which we treat as a function of $\theta$ for each fixed $x$; the dependence on the ranking $\rho$ will be implicit in the notation.  For optimality, we want the belief function to be as large as possible without breaking the validity requirement.  So we follow \citet{imbasics} and impose on $\rho$ the condition that 
\begin{equation}
\label{eq:condition1}
\text{$\psi_x(\theta)$ is maximized at $\theta=\theta_0$ for each $x$}. 
\end{equation}
By \eqref{eq:condition1}, the derivative of $\psi_x(\theta)$ with respect to $\theta$ vanishes at $\theta_0$, i.e., 
\begin{equation}
\label{eq:score.balance}
\sum_{x': \rho(x') < \rho(x)} T_{\theta_0}(x') f_{\theta_0}(x') = 0, \quad \forall\; x \in \XX, 
\end{equation}
where $T_\theta(x) = (\del/\del\theta) \log f_\theta(x) = x/\theta-1$ is the score function.  Recall that, in many cases, including the Poisson example considered here, the score function has zero expectation.  Therefore, we refer to \eqref{eq:score.balance} as the \emph{score-balance condition}---that is, in order to satisfy \eqref{eq:score.balance}, the ranking $\rho$ must be suitably symmetric, or balanced, with respect to the sampling distribution of $T_{\theta_0}(X)$ under $X \sim \pois(\theta_0)$.  

By \eqref{eq:condition1}, the second derivative of $\psi_x(\theta)$ with respect to $\theta$,  at $\theta=\theta_0$ satisfies
\begin{equation}
\label{eq:V.inequality}
\sum_{x': \rho(x') < \rho(x)} V_{\theta_0}(x') f_{\theta_0}(x') < 0, \quad \forall\; x \in \XX, 
\end{equation}
where $V_{\theta_0}(x) = T_{\theta_0}(x)^2 + (\del/\del\theta) T_\theta(x) \bigr|_{\theta=\theta_0}$.  Consequently, the ranking $\rho$ must be chosen so that \eqref{eq:V.inequality} holds in addition to \eqref{eq:score.balance}.  Following a remark about notation, we give some intuition for how this can be accomplished.  

In what follows, for ease of interpretation, we report the algorithm and numerical results with the current parametrization of $\theta$, the mean of the Poisson distribution.  However, it is more convenient theoretically to work with the natural parameter in the exponential family representation.  So by working first with parameter $\eta=\log\theta$, i.e., differentiating with respect to $\eta$, and then substituting $\theta=e^\eta$, we have 
\begin{equation}
\label{eq:new.par}
T_{\theta_0}(x) = x-\theta_0 \quad \text{and} \quad V_{\theta_0}(x) = (x-\theta_0)^2 - \theta_0.
\end{equation}
These expressions are different from what is obtained by working with $\theta$ throughout.  

In order to achieve \eqref{eq:V.inequality}, the basic idea is to choose $\rho$ such that $x$ values with small values of $|T_{\theta_0}(x)| = |x-\theta_0|$ are assigned higher rank.  This is based on the fact that $V_{\theta_0}(x) = (x-\theta_0)^2-\theta_0$ is a quadratic in $T_{\theta_0}(x)$, and so $V_{\theta_0}(x)$ is smallest for $x$ with small absolute score.  The problem is not this simple, unfortunately, because this intuition fails to account for the multiplication by the probability mass function in \eqref{eq:V.inequality}.  Due to the discreteness, an optimal ranking $\rho^\star$ satisfying both \eqref{eq:score.balance} and \eqref{eq:V.inequality} does not exist in general.  But the formal algorithm described in the following subsection recursively defines a permutation that \emph{approximately} achieves this optimal ordering.

\subsection{Optimal ordering: a recursive scheme}
\label{SS:recursive}

Here we construct an increasing sequence $\{E_r: r \geq 0\}$ of subsets of $\XX$, with $E_0=\varnothing$.  From these, the (approximately) optimal ranking $\rho^\star$ is obtained as $\rho^\star(E_r \setminus E_{r-1}) = r$.  

Recall that, here, we are working with the abused notation described above.  That is, we start out with the Poisson distribution indexed by the natural parameter $\eta$, the log of the mean, and then substitute $\theta=e^\eta$ back into the expressions for the score function, etc.  Define two subsets of $\XX$:
\[ \XX^+ = \{x \in \XX: T_{\theta_0}(x) \geq 0\} \quad \text{and} \quad \XX^- = \{x \in \XX: T_{\theta_0}(x) < 0\}. \]
These sets with non-negative and negative scores will be updated iteratively in the algorithm that follows.  The basic idea is to choose $E_r$, containing elements of both $\XX^+$ and $\XX^-$, in such a way that \eqref{eq:score.balance} and \eqref{eq:V.inequality} hold, at least approximately.  Algorithm~\ref{algo:sort} gives the details.  R code to implement this procedure is available at \url{www.math.uic.edu/~rgmartin}.  Line~22 stops the algorithm if both proxies---$\nu_r(1)$ and $\nu_r(2)$---for the left-hand side of \eqref{eq:V.inequality} are positive.  In our experience, no such error will occur.  

\begin{algorithm*}[t]
\smallskip
Given tolerance $\eps > 0$, take finite $\XX^\eps \subset \XX$ such that $\prob_{X|\theta_0}\{\XX^\eps\} \geq 1-\eps$.  
\begin{algorithmic}[1]
\State {\bf initialize} $\XX_0^+ = \XX^+ \cap \XX^\eps$, $\XX_0^- = \XX^- \cap \XX^\eps$, $E_0 = \varnothing$, $r=1$;
\While {$r \leq \#(\XX^\eps)$} 
\If {$\XX_{r-1}^+ = \varnothing$} 
\State $E_r = E_{r-1} \cup \{\max \XX_{r-1}^-\}$;
\State $\XX_r^- = \XX_{r-1}^- \setminus \{\max \XX_{r-1}^-\}$;
\ElsIf {$\XX_{r-1}^- = \varnothing$}
\State $E_r = E_{r-1} \cup \{\min \XX_{r-1}^+\}$;
\State $\XX_r^+ = \XX_{r-1}^+ \setminus \{\min \XX_{r-1}^+\}$;
\Else
\State $E_r(1) = E_{r-1} \cup \{\min\XX_{r-1}^+\}$;
\State $E_r(2) = E_{r-1} \cup \{\max\XX_{r-1}^-\}$;
\For {$k=1,2$} 
\State $\tau_r(k) = (\del/\del\theta) \log \sum_{x \in E_r(k)} f_\theta(x) \bigr|_{\theta=\theta_0}$;
\State $\nu_r(k) = \sum_{x \in E_r(k)} V_{\theta_0}(x) f_{\theta_0}(x)$; 
\EndFor
\If {$|\tau_r(1)| \leq |\tau_r(2)|$ {\bf and} $\nu_r(1) \leq 0$} 
\State $E_r = E_r(1)$;
\State $\XX_r^+ = \XX_{r-1}^+ \setminus \{\min \XX_{r-1}^+\}$;
\ElsIf {$\nu_r(2) \leq 0$}
\State $E_r = E_r(2)$;
\State $\XX_r^- = \XX_{r-1}^- \setminus \{\max \XX_{r-1}^-\}$;
\Else \; {\bf stop} %\emph{something is wrong!}
\EndIf
\EndIf
\State $r = r+1$;
\EndWhile
\end{algorithmic}
\caption{\bf-- Recursive ordering.}
\label{algo:sort}
\end{algorithm*}

As we described previously, it is intuitively clear that the recursive ordering scheme in Algorithm~\ref{algo:sort} will determine an ordering $\rho=\rho^\star$ such that \eqref{eq:score.balance} and \eqref{eq:V.inequality} approximately hold.  Here we do a numerical check to confirm this claim.  Let 
\begin{equation}
\label{eq:TV.r}
T(r) = \sum_{x \in E_r} T_{\theta_0}(x) f_{\theta_0}(x) \quad \text{and} \quad V(r) = \sum_{x \in E_r} V_{\theta_0}(x) f_{\theta_0}(x), 
\end{equation}
where $E_r$ is constructed as in Algorithm~\ref{algo:sort}, and $T_{\theta_0}(x)$ and $V_{\theta_0}(x)$ are as in \eqref{eq:new.par}.  If \eqref{eq:score.balance} and \eqref{eq:V.inequality} hold, then we expect $T(r)$ to be close to 0 and $V(r)$ to be negative, respectively, for all $r$.  Figure~\ref{fig:checks} plots $T(r)$ and $V(r)$ as functions of $r$, and, indeed, our expectations are mostly realized.  At first look, the fluctuations in $T(r)$ seem a bit troubling, but it turns out that these are effectively dampened by the magnitude of $V(r)$.  To see this, let $\psi_r(\theta)$ be the $\psi_x(\theta)$ in \eqref{eq:dist.fun} such that $\rho(x)=r$.  A two-term Taylor approximation of $\psi_r(\theta)$ at $\theta=\theta_0$ can be written as 
\begin{align*}
\psi_r(\theta)-\psi_r(\theta_0) %& = T(r) (\theta-\theta_0) + (1/2) V(r) (\theta-\theta_0)^2 + o(|\theta-\theta_0|^2) \\
& = V(r)(\theta-\theta_0) \Bigl[ \frac{T(r)}{V(r)} + \frac{\theta-\theta_0}{2} \Bigr] + o(|\theta-\theta_0|^2).
\end{align*}
So if $V(r) < 0$ and $T(r)/V(r)$ is close to zero, respectively, for each $r$, then the difference should be negative and, hence, $\psi_r(\theta)$ is maximized at $\theta=\theta_0$ for each $r$.  From Figure~\ref{fig:checks} it is clear that $T(r)/V(r)$ has smaller fluctuations than $T(r)$.  

\begin{figure}[t]
\begin{center}
\subfigure[$T(r)$ vs.~$r$, with $\theta_0=5$]{\scalebox{0.6}{\includegraphics{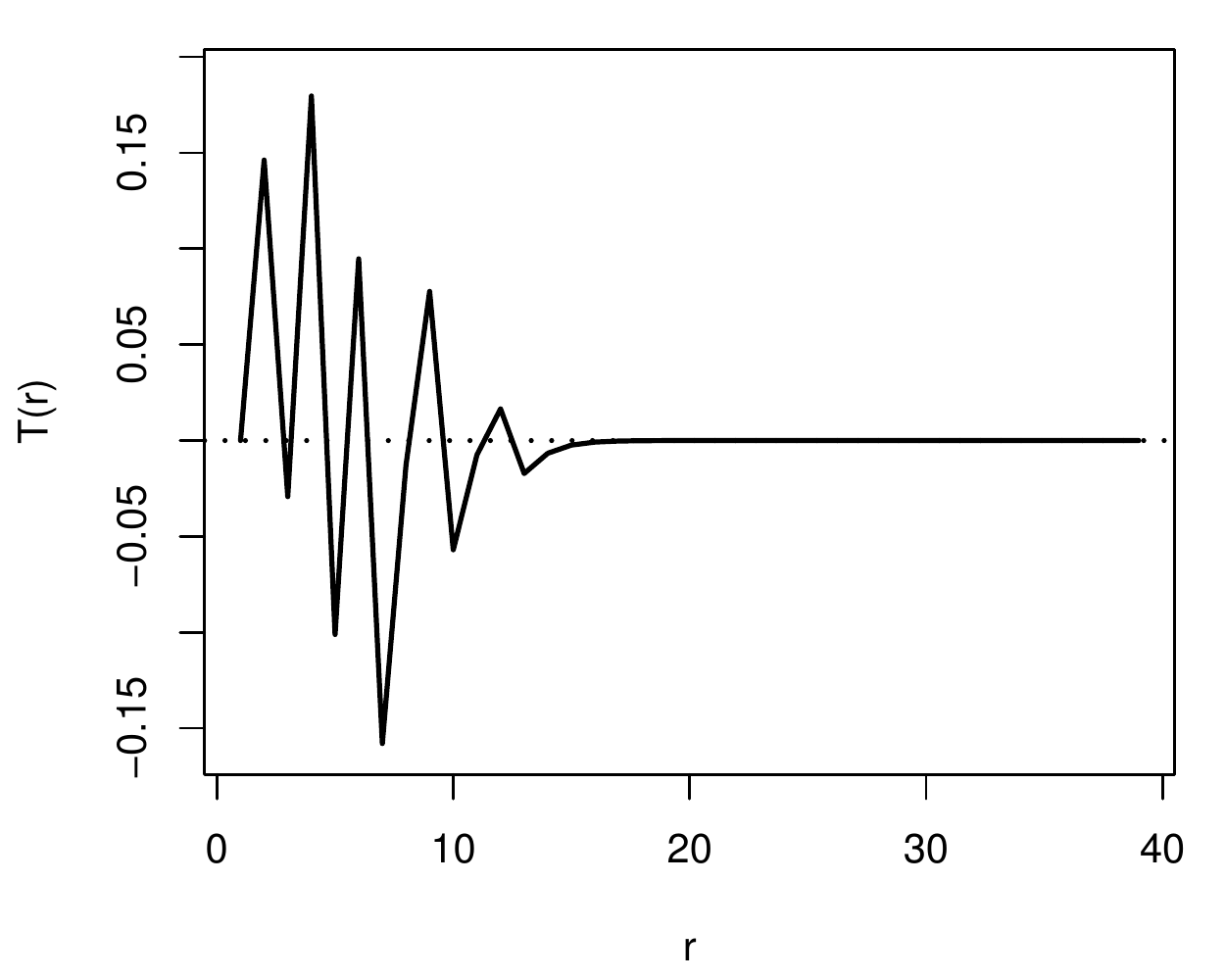}}}
\subfigure[$V(r)$ vs.~$r$, with $\theta_0=5$]{\scalebox{0.6}{\includegraphics{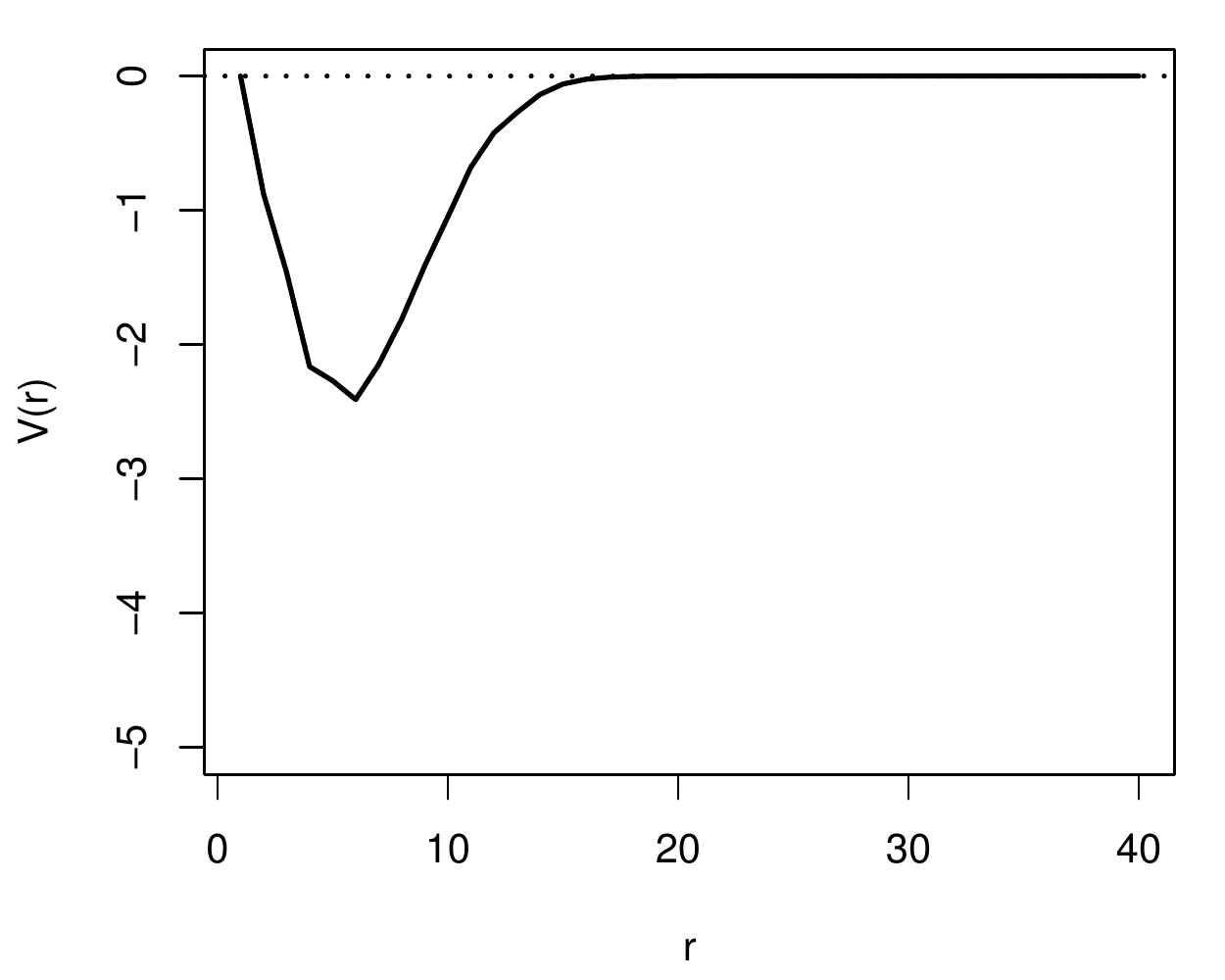}}}
\subfigure[$T(r)$ vs.~$r$, with $\theta_0=10$]{\scalebox{0.6}{\includegraphics{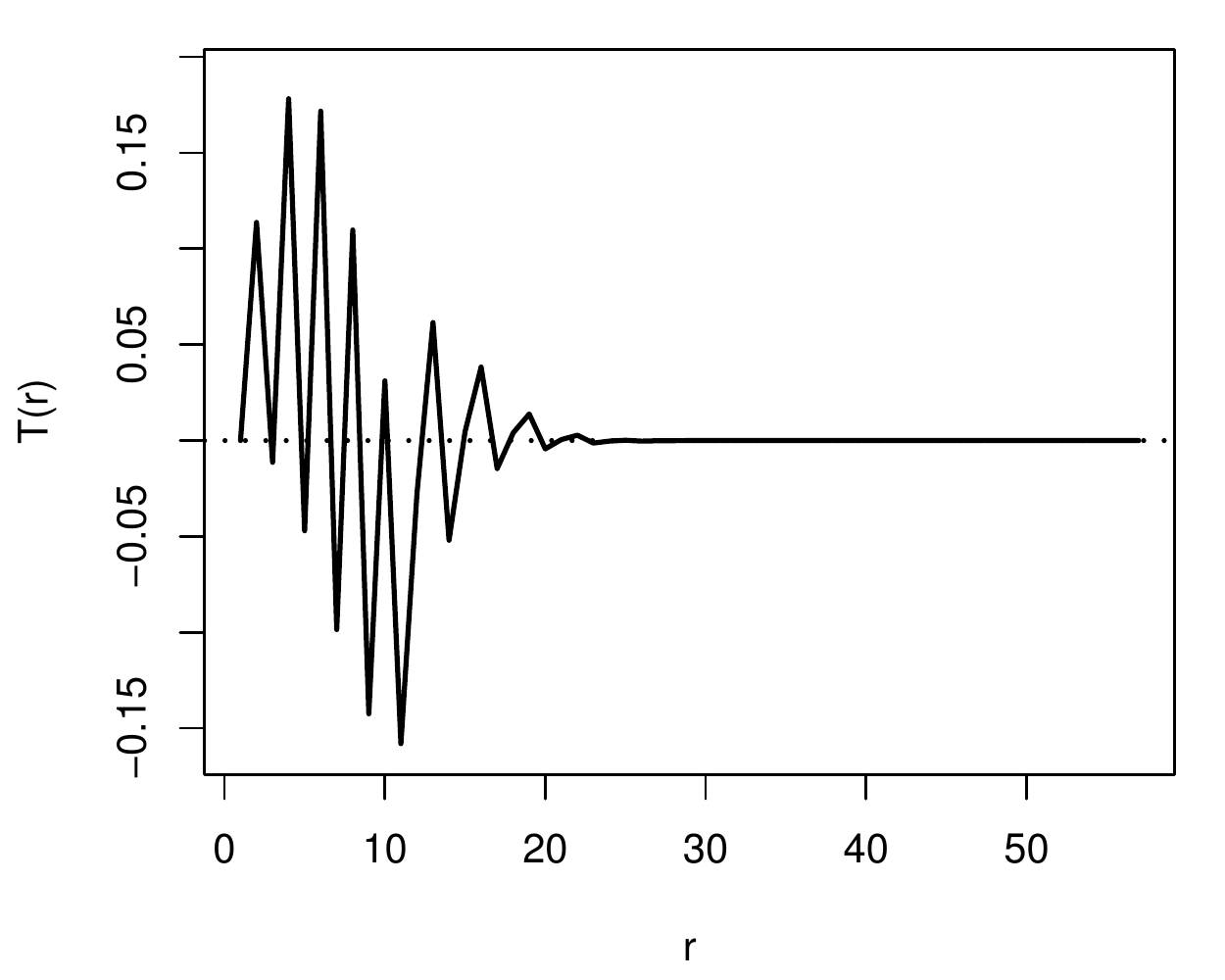}}}
\subfigure[$V(r)$ vs.~$r$, with $\theta_0=10$]{\scalebox{0.6}{\includegraphics{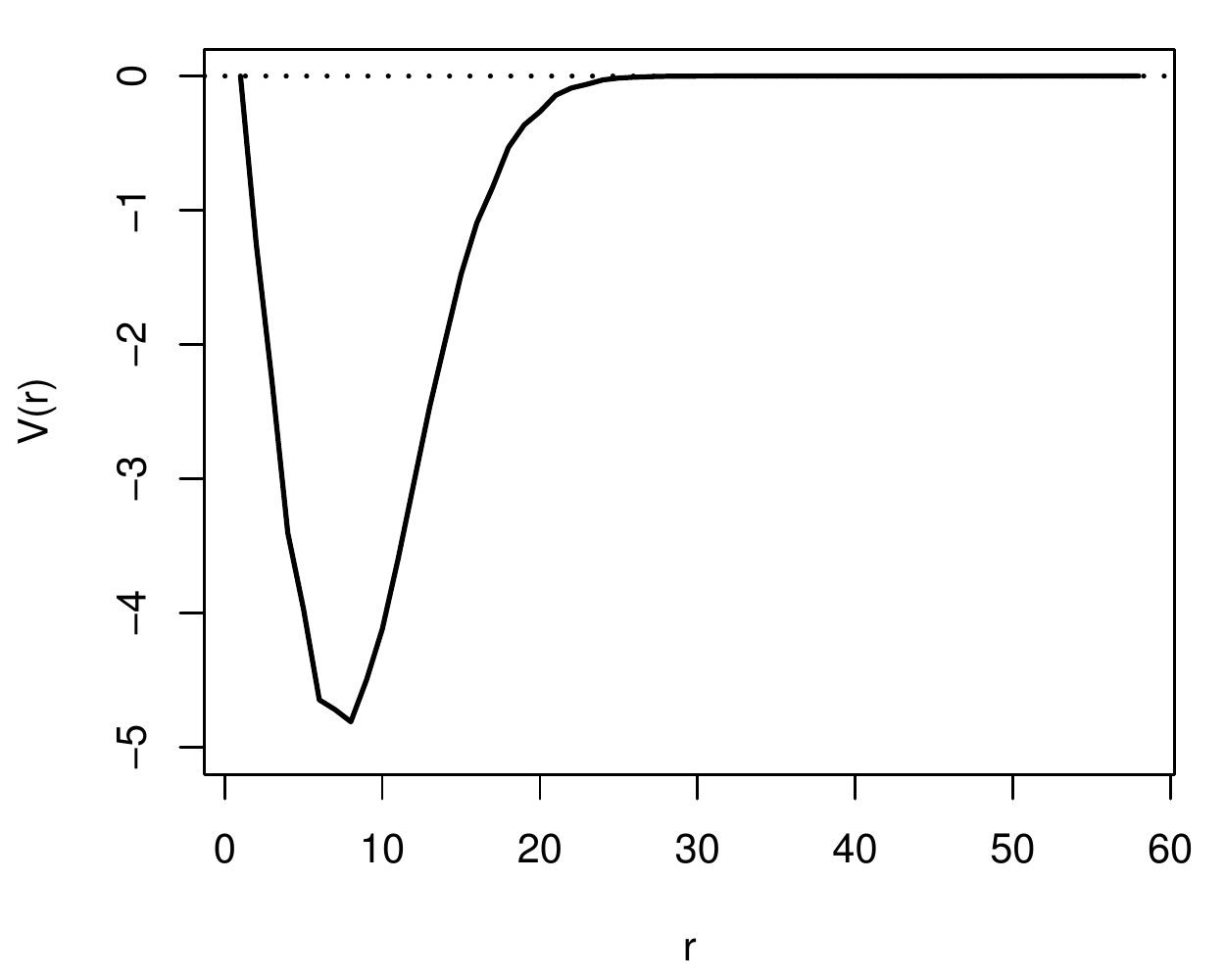}}}
\caption{Numerical checks that Algorithm~\ref{algo:sort} produces a ranking $\rho$ such that \eqref{eq:score.balance} and \eqref{eq:V.inequality} approximately hold.  Here $r$ is the index in Algorithm~\ref{algo:sort} and $T(r)$ and $V(r)$ are defined in \eqref{eq:TV.r}. The top row is for $\theta_0=5$ and the second row for $\theta_0=10$; the same vertical axis scale is used in both rows.}
\label{fig:checks}
\end{center}
\end{figure}

\subsection{Numerical illustrations---mean only}
\label{SS:mean}

Here we study the plausibility function $\pl_x(\theta_0; \S_\rho) = 1-\bel_x(\{\theta_0\}^c; \S_\rho)$ based on the optimal ranking $\rho=\rho^\star$ in Section~\ref{SS:recursive}.  The belief function at $\{\theta_0\}$ is zero for all $\theta_0$ so we can safely ignore it.  We will compare the plausibility function behavior to that of two classical textbook methods for testing $H_0: \theta=\theta_0$ versus $H_1: \theta\neq\theta_0$.  

\begin{enumerate}
\item \emph{Normal approximation}.  A naive approximation is to assume $X \sim \nm(\theta, \theta)$.  Then the textbook size-$\alpha$ normal test rejects $H_0$ based on observed $X=x$ iff $p_1(x;\theta_0) \equiv 2-2\Phi(\theta_0^{-1/2}|x-\theta_0|) \leq \alpha$.  In light of \eqref{eq:imtest}, we take $p_1(x; \theta_0)$ as the ``plausibility function'' corresponding to this normal test procedure.  
\vspace{-2mm}
\item \emph{Poisson equal-tail approximation}.  A somewhat less-naive size-$\alpha$ test rejects $H_0$ based on observed $X=x$ iff $F_{\theta_0}(x) \leq \alpha/2$ or $1-F_{\theta_0}(x-1) \leq \alpha/2$.  Equivalently, this test rejects $H_0$ iff $p_2(x;\theta_0) \equiv 2\min\{F_{\theta_0}(x), 1-F_{\theta_0}(x-1)\} \leq \alpha$.  We take $p_2(x; \theta_0)$ as the ``plausibility function'' corresponding to this test procedure. 
\end{enumerate}

Figure~\ref{fig:pl.cdf} shows the distribution functions of $p_1(X;\theta_0)$, $p_2(X;\theta_0)$, and $\pl_X(\theta_0; \S_\rho)$, all treated as functions of the random variable $X \sim \pois(\theta)$, for a variety of $\theta$ values, with $\theta_0=7$.  There are two things to look for in these plots.  The first, for $\theta=\theta_0$, is that the distribution function does not exceed the diagonal line corresponding to the distribution function of $\unif(0,1)$.  This demonstrates the validity property.  In Panel~(c) we find that only the IM-based plausibility function satisfies the validity criterion.  The second thing we are looking for is stochastic dominance.  Specifically, if one distribution function is uniformly smaller than another distribution function, then the former corresponding plausibility function is stochastically larger than the latter.  This, in turn, means that inference based on the former will, in general, be more efficient.  Panels~(a) and (b) show no clear dominance, but the IM tends to outperform the normal approximation.  Panels~(d)--(f) show that the IM-based plausibility function dominates, stochastically, the other two and, hence, the corresponding inference is more efficient.  

% One that sits above the IM CDF is the equal-tail Poisson plausibility...

\begin{figure}%[t]
\begin{center}
\subfigure[$\theta=4$]{\scalebox{0.6}{\includegraphics{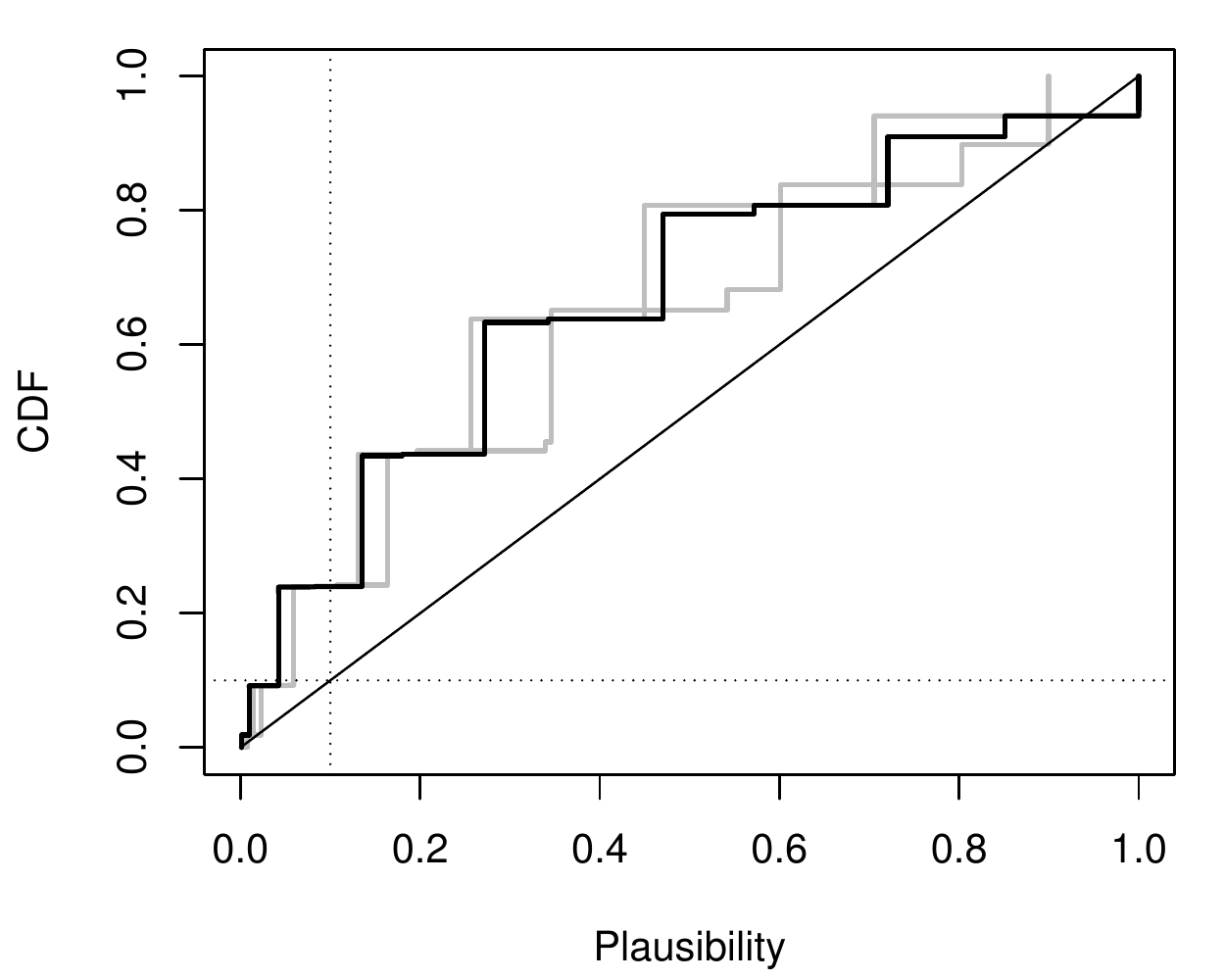}}}
\subfigure[$\theta=6$]{\scalebox{0.6}{\includegraphics{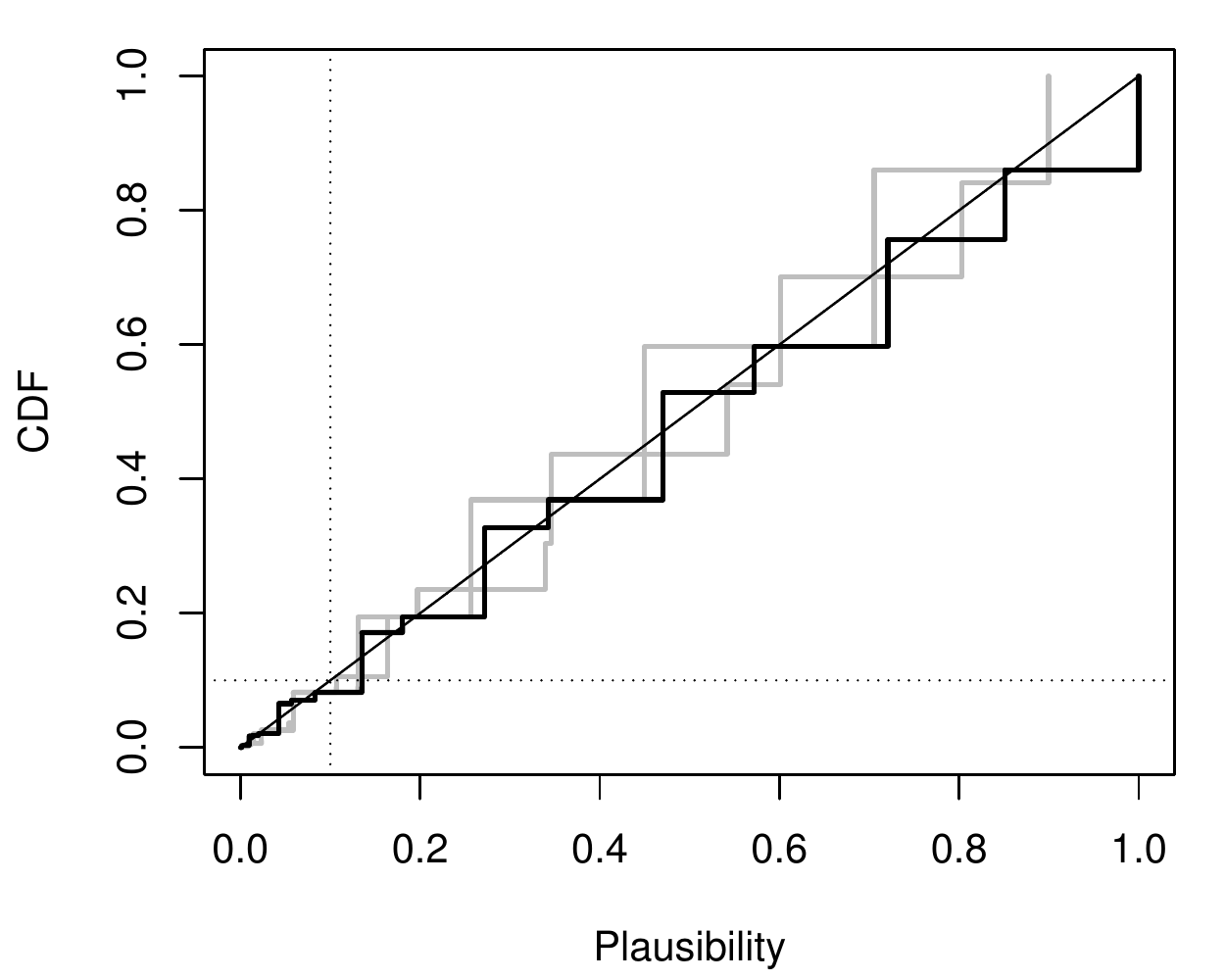}}}
\subfigure[$\theta=\theta_0=7$]{\scalebox{0.6}{\includegraphics{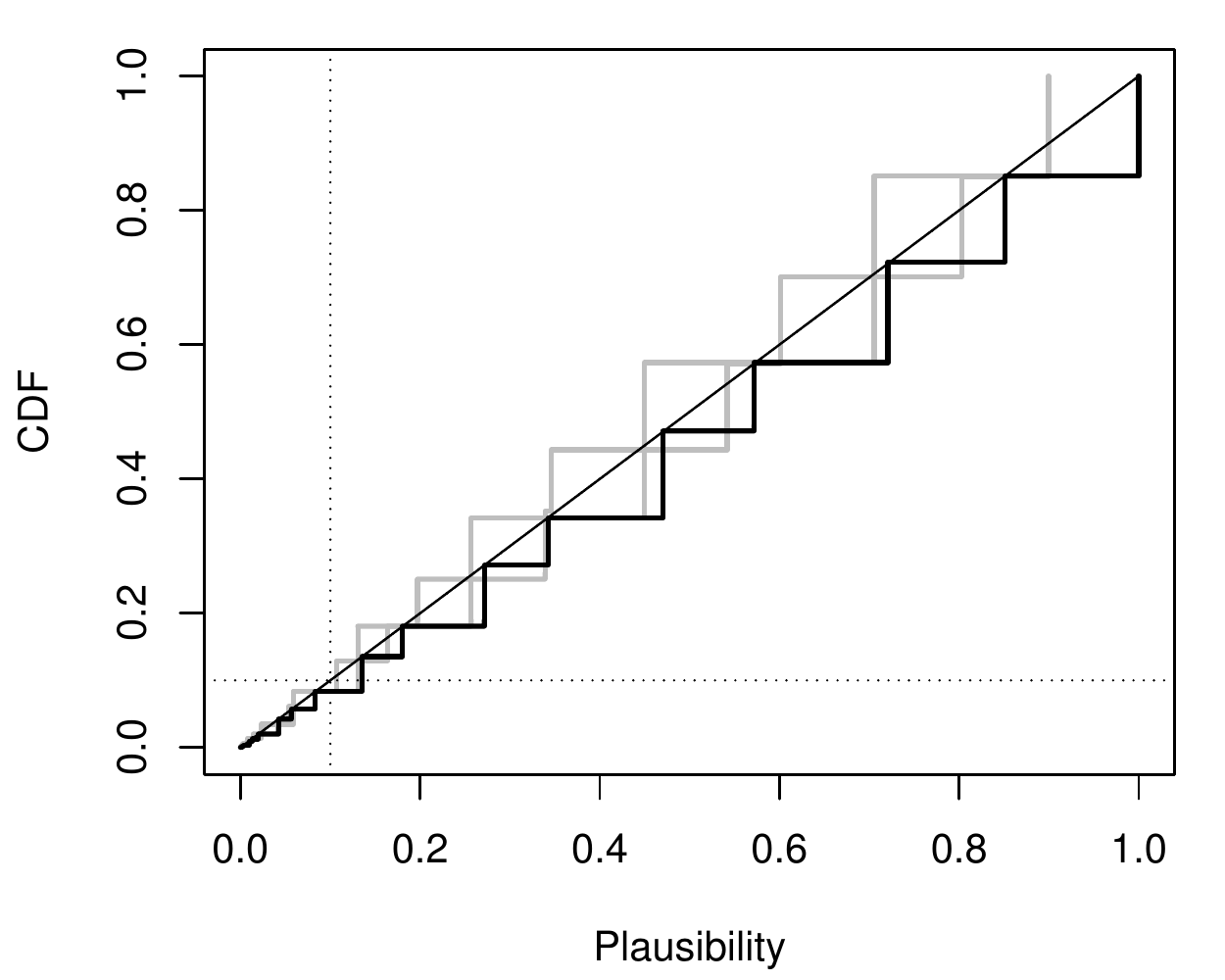}}}
\subfigure[$\theta=8$]{\scalebox{0.6}{\includegraphics{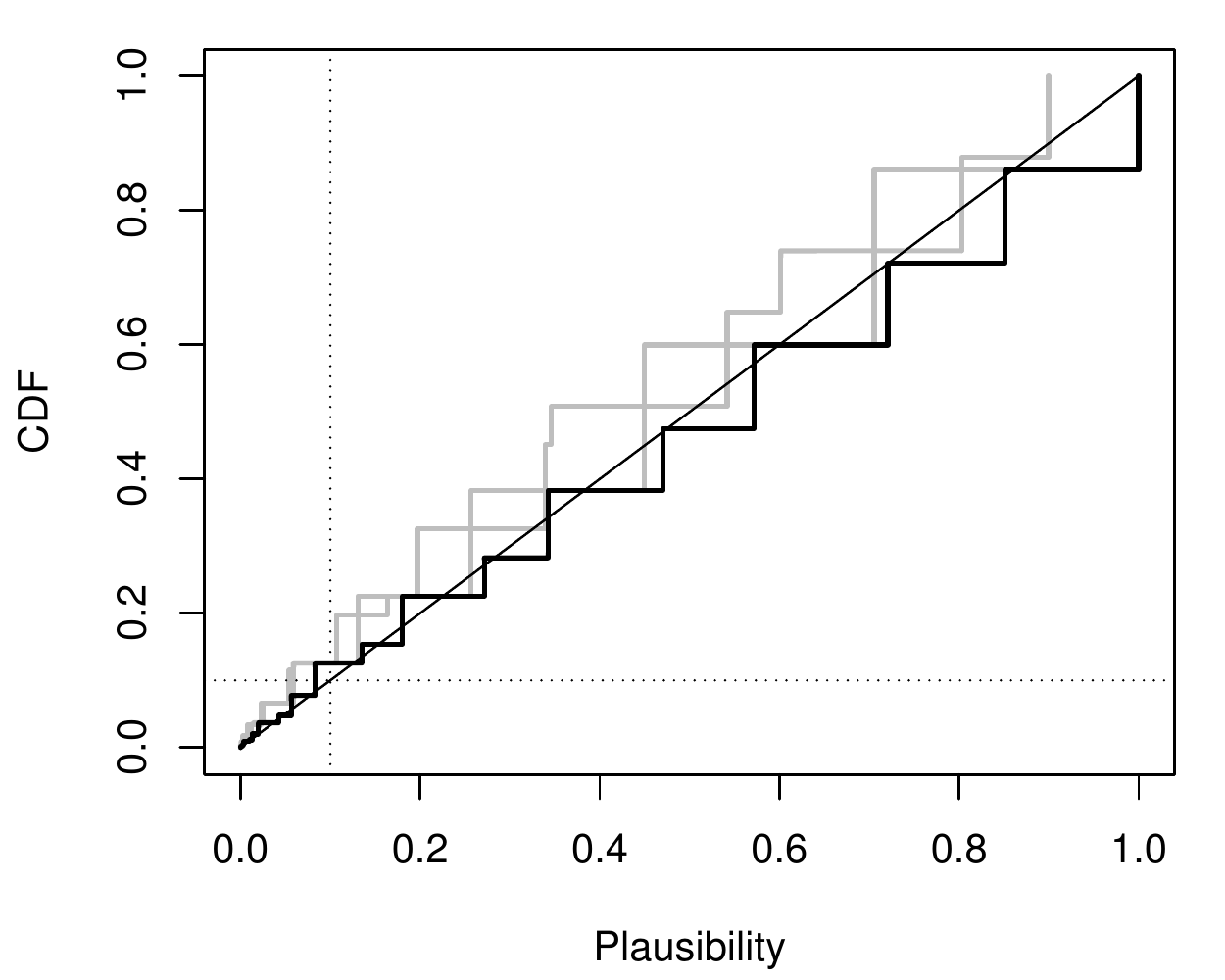}}}
\subfigure[$\theta=10$]{\scalebox{0.6}{\includegraphics{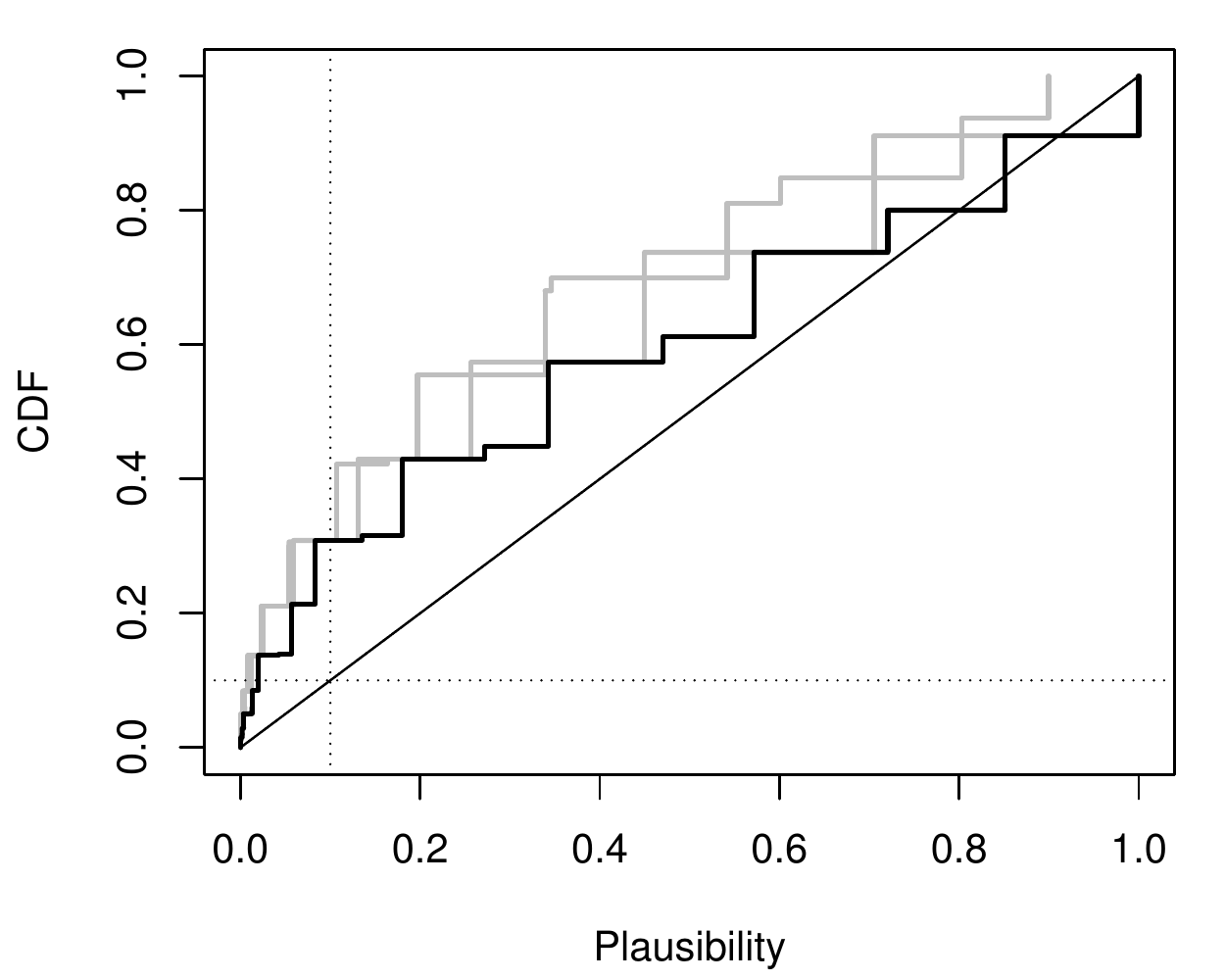}}}
\subfigure[$\theta=12$]{\scalebox{0.6}{\includegraphics{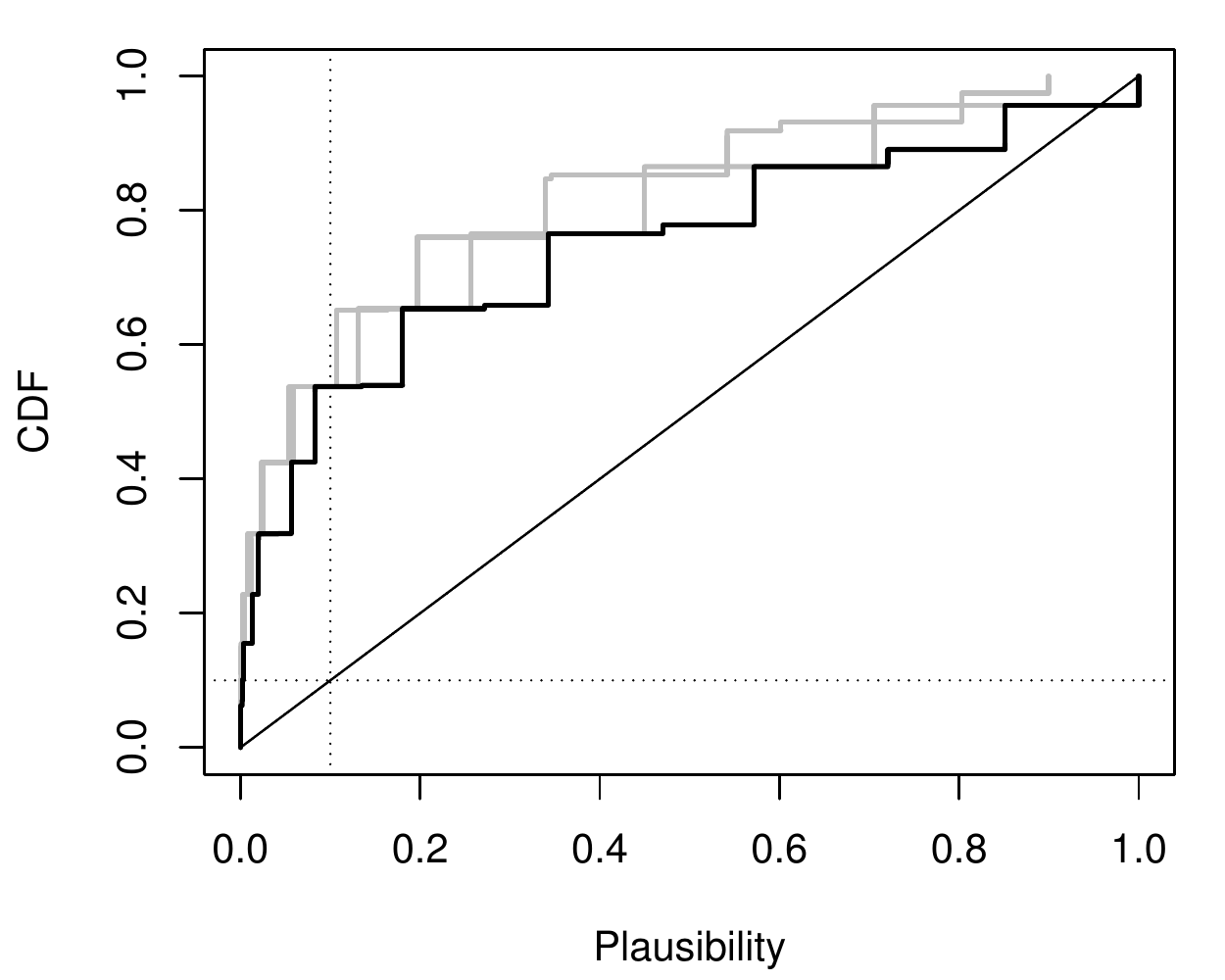}}}
\caption{Plots of the distribution function (CDF) of $\pl_X(\theta_0)$, when $X \sim \pois(\theta)$, for $\theta_0=7$ and various $\theta$'s.  In each panel, the two gray lines correspond to the two ``frequentist plausibility functions'' described in the text; the black line corresponds to the optimal IM plausibility function.  Each is based on 100,000 Monte Carlo samples.}
\label{fig:pl.cdf}
\end{center}
\end{figure}

Figure~\ref{fig:pois.pl} plots the ``plausibility functions'' $p_1(x;\theta)$ and $p_2(x;\theta)$, based on the frequentist methods, along with the optimal IM plausibility function, as functions of $\theta$ for various $x$ values.  One general observation is that both the IM and the normal plausibility functions peak at $\theta=x$, the maximum likelihood estimate, shown by a vertical line, while the Poisson equal-tail plausibility function is off-center.  The horizontal line describes the $\alpha=0.1$ level sets, i.e., the 90\% plausibility intervals.  In each case, the normal plausibility interval---which corresponds exactly to the textbook confidence interval---is a hair shorter than the IM plausibility interval.  However, unlike the IM plausibility interval, which has coverage guarantees via the validity theorem (see Panel~(c) of Figure~\ref{fig:pl.cdf}), the normal confidence interval has no such guarantees in this sort of mis-specified model.  

\begin{figure}[t]
\begin{center}
\subfigure[$x=0$]{\scalebox{0.6}{\includegraphics{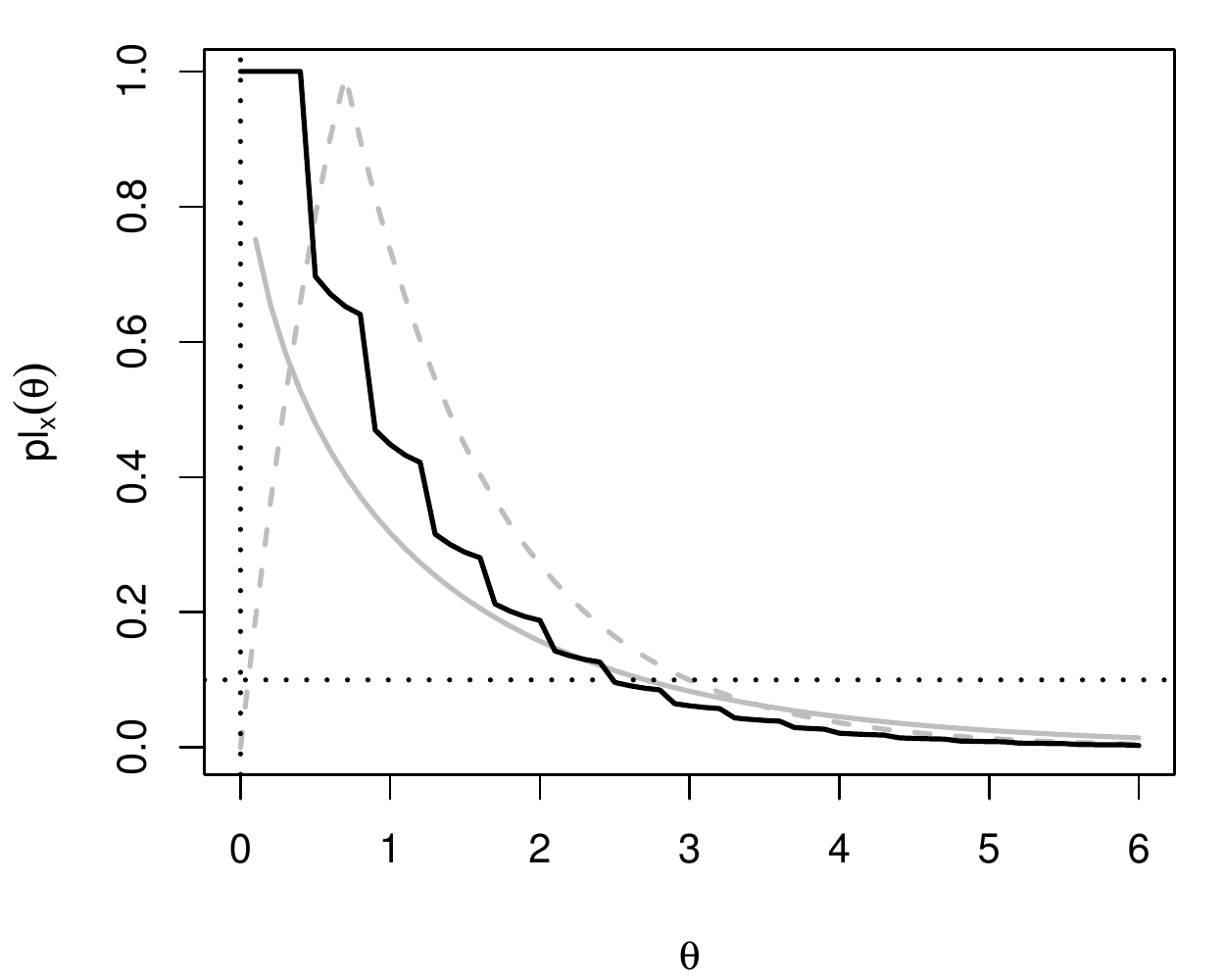}}}
\subfigure[$x=3$]{\scalebox{0.6}{\includegraphics{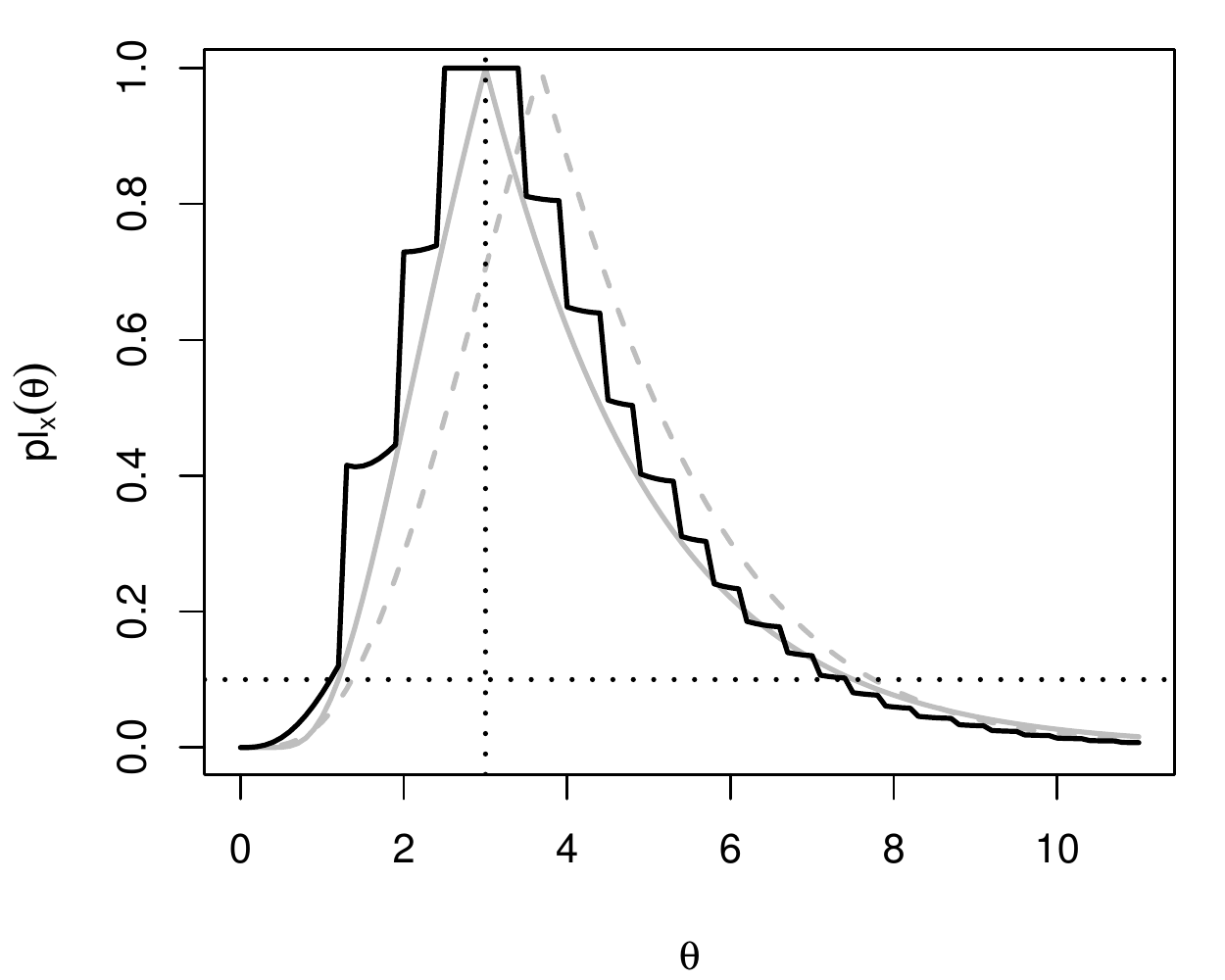}}}
\subfigure[$x=7$]{\scalebox{0.6}{\includegraphics{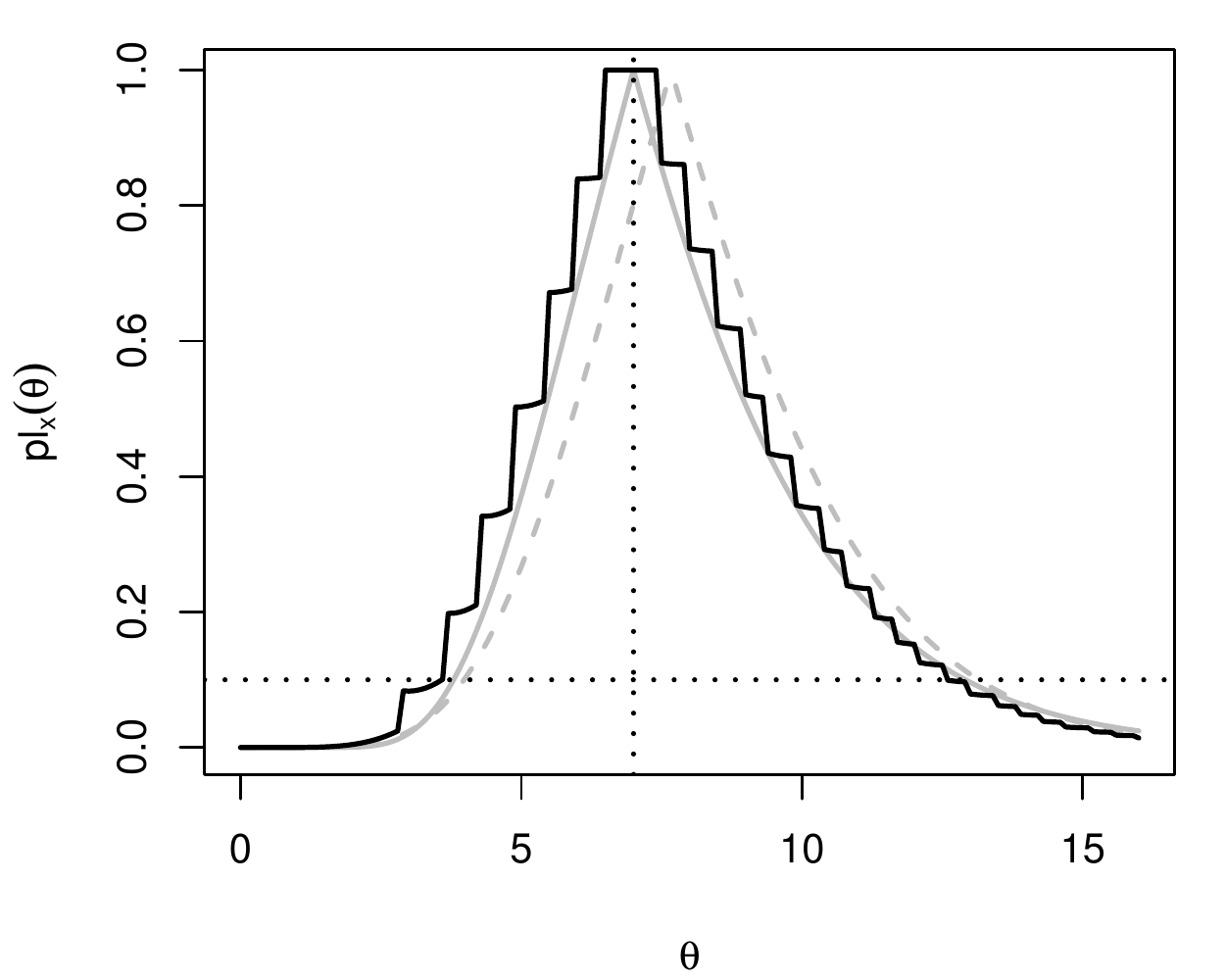}}}
\subfigure[$x=10$]{\scalebox{0.6}{\includegraphics{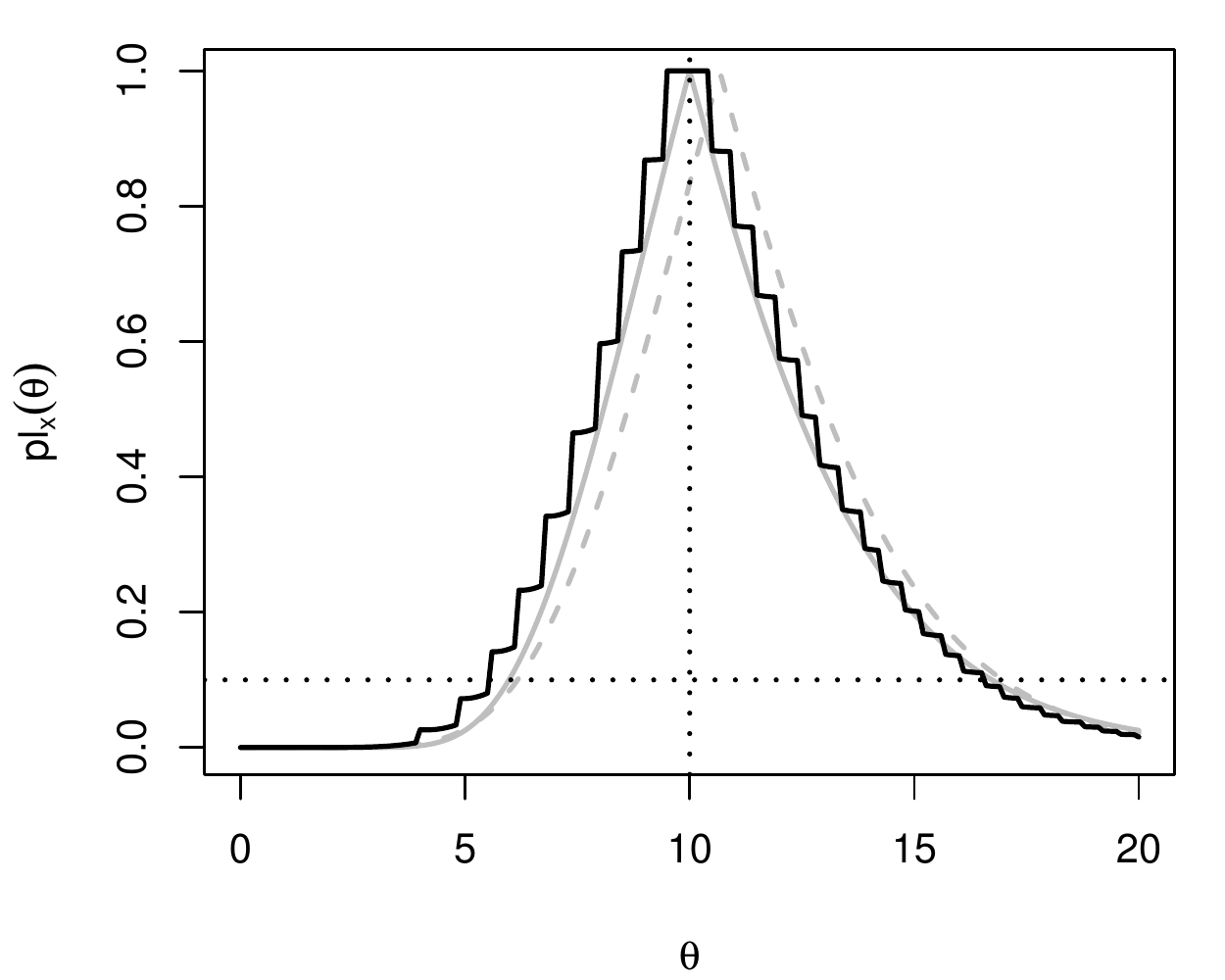}}}
\caption{Plots of $\pl_x(\theta)$, as a function of $\theta$, for various $x$ values.  In each panel, solid and dashed gray lines are ``plausibility functions'' $p_1(x;\theta)$ and $p_2(x;\theta)$, respectively, and the the solid black line is the optimal IM plausibility function.}
\label{fig:pois.pl}
\end{center}
\end{figure}

\subsection{Numerical illustrations---mean plus background}
\label{SS:background}

As shown above, the discreteness of the Poisson random variable makes it challenging to develop an efficient IM for its mean, $\theta$.  An additional challenge arises when one considers an \emph{a priori} constraint on the possible values of $\theta$.  An IM for the constrained Poisson mean was developed in \citet{leafliu2012}.  Here, we briefly review the problem and then introduce a more efficient IM using the scheme in Section~\ref{SS:recursive}.  Several frequentist methods have also been developed for this problem; see \citet{mandelkern2002}.

Suppose for example that the Poisson count, $X$, is comprised of a number of signal events, $S$, and independent background events, $B$, so that $X = S + B$.  If $S \sim \pois(\lambda)$ and $B \sim \pois(\beta)$, then $X \sim \pois(\lambda + \beta)$.  Now suppose that the value of $\beta$ has been established with certainty.  If $\theta = \lambda + \beta$ is the mean of $X$, then the fact that $\lambda$ must be nonnegative implies the constraint, i.e., $\theta \geq \beta$.  The problem with ignoring such a constraint is clear in Figure~\ref{fig:pois.pl}---$\pl_x(\theta)$ can be positive for any $\theta$ value, even for $\theta < \beta$.  So, in light of the constraint $\theta \geq \beta$, the IM must be modified appropriately.

Technically, the problem can be seen in the auxiliary variable, $U$.  After observing $x$, the constraint implies that $U$ must lie in a strict subset of $\UU$.  Applying the constraint, $\theta \in [\beta, \infty)$, to \eqref{eq:poisson.association}, leads to a constraint on $U$: $G_{x+1}(\beta) < u \leq 1$.  Without considering constraints, $\S$ is intended to predict $U$ realizations anywhere in $\UU$.  Some members of its support $\SS$ may not be contained in $(G_{x+1}(\beta),1]$; these are conflict cases.  Let $S'$ be the largest $S \in \SS$ such that $S \cap (G_{x+1}(\beta), 1] = \varnothing$.  The probability on $S'$ and all its subsets is known as conflict mass: $\prob_U\{S'\} = \bel_x([\beta, \infty)^c; \S)$.  An IM for the constrained $\theta$ must distribute this conflict mass somewhere in the constraint set.

The elastic belief method \citep{leafliu2012} expands conflict cases so that each one intersects with the constraint.  In effect, the conflict mass is moved to a subset of the parameter constraint set.  The proof of validity for the elastic belief method also applies to more general procedures.  Therefore, it is not necessary to formulate the mathematical details of the elastic belief method in this problem.
%In fact, validity holds when the conflict mass is moved to any arbitrary subset of the parameter constraint.  
We can simply place any conflict mass on $\{\beta\}$, which is on the boundary of the constraint.  The resulting plausibility function for point assertions is:
\[
\pl_x(\{\theta_0\}; \S_\rho' ) = \begin{cases}
0 & \text{if $\theta_0 < \beta$;} \\
1 & \text{if $\theta_0 = \beta$ and $\bel_x([\beta, \infty)^\mathrm{c}; \S_\rho) > 0$;} \\
\pl_x(\{\theta_0\}; \S_\rho) & \text{otherwise,}
\end{cases}
\]
where $\S_\rho'$ is the predictive random set implied by moving conflict cases to $\{\beta\}$, and $\S_\rho$ is the predictive random set constructed recursively in Section~\ref{SS:recursive}.  In the comparisons that follow, we refer to this as the EB--SB method, for elastic belief + score-balance.  The 90\% EB--SB plausibility interval, when $\beta = 15$, is shown as black lines in Figure~\ref{fig:pois_interval2}.  The gray lines correspond to the plausibility intervals in \citet{leafliu2012}.  EB--SB produces a shorter interval at each $x$ in the figure.  

%For comparison, Table \ref{tab:bothConstrMethods} shows the interval bounds and widths for $x \in [0,12]$ when $\beta=3$.

\begin{figure}[t]%[htbp]
\begin{center}
\includegraphics[scale=0.62]{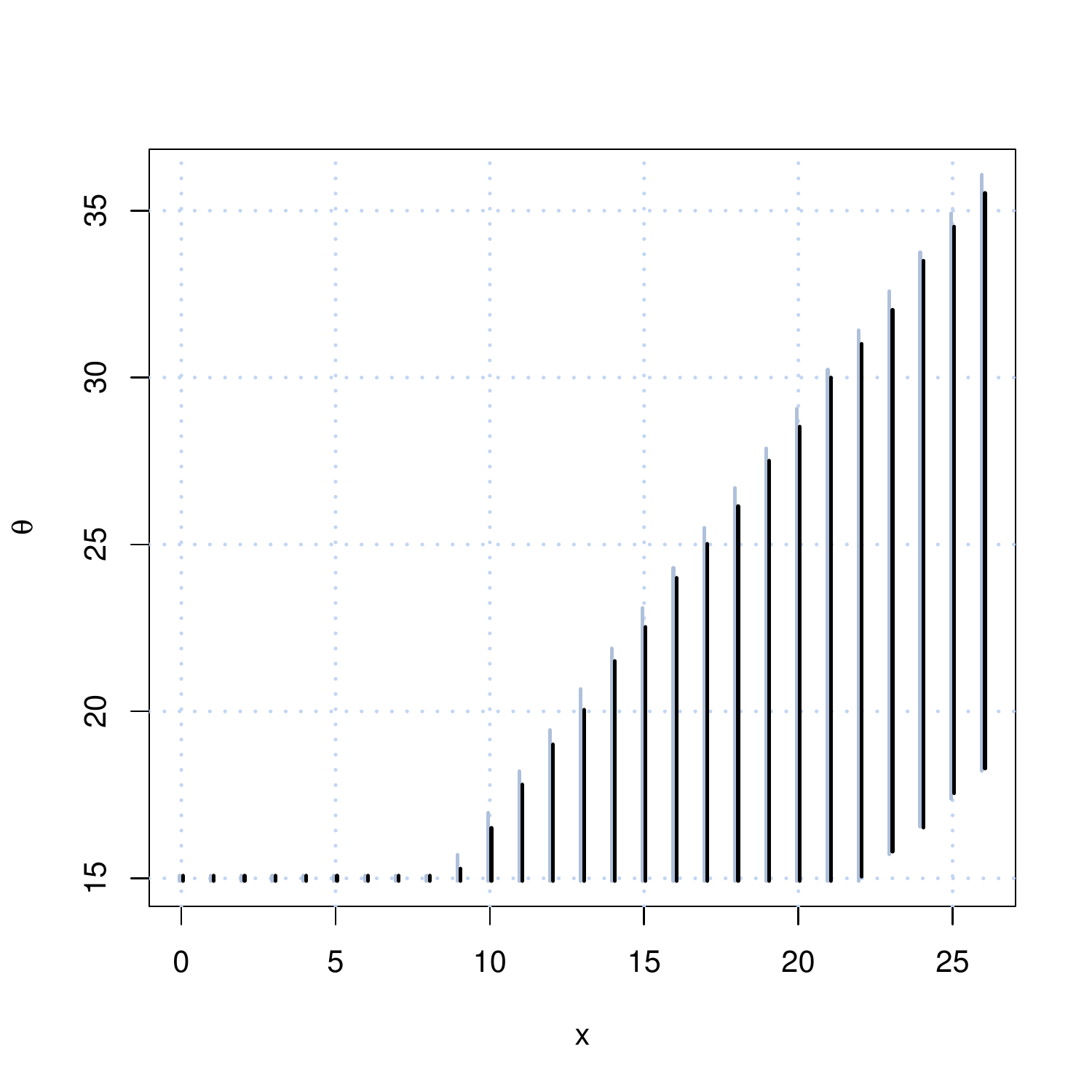}
\caption{90\% plausibility intervals for $\theta$ with $\beta=15$.  The black and gray lines are the intervals based on EB--SB and the method in \citet{leafliu2012}, respectively.}
\label{fig:pois_interval2}
\end{center}
\end{figure}

For further comparison, we consider a variety of existing methods: confidence intervals of \citet[][FC98]{Feldman.Cousins.1998}, \citet{Giunti1999}, \citet[][MS00b]{Mandelkern.Schultz.2000b}, \citet[][RW00]{Roe.Woodroofe.2000}, \citet{Roe.Woodroofe.1999} with the \citet{Mandelkern.Schultz.2000a} adjustment (RW+MS00a), and the plausibility interval of \citet[][ELL12]{leafliu2012}.  Figure~\ref{fig:coverage} shows the coverage probabilities for each interval estimate of $\lambda$, for $\beta=3$, as a function of $\lambda \in [0,4]$.  EB--SB seems to be the best performer in the left-hand column but, in the right-hand column, there is no clear winner.  Figure~\ref{fig:intervalWidths} plots the width of the nominal 90\% interval estimates, as a function of data $x$, with $\beta=3$, for the various methods described above.  Here we see that the EB--SB plausibility interval is the narrowest up to $x = 5$ at which point it becomes slightly wider than the intervals of other methods.  But we must reiterate: IMs are more than just tools to construct frequentist procedures.  That said, it is remarkable that the EB--SB plausibility intervals are as good or better than its competitors based on frequentist criteria.

\begin{figure}%[t]
\begin{center}
\subfigure[EB--SB vs.~FC98]{\scalebox{0.62}{\includegraphics{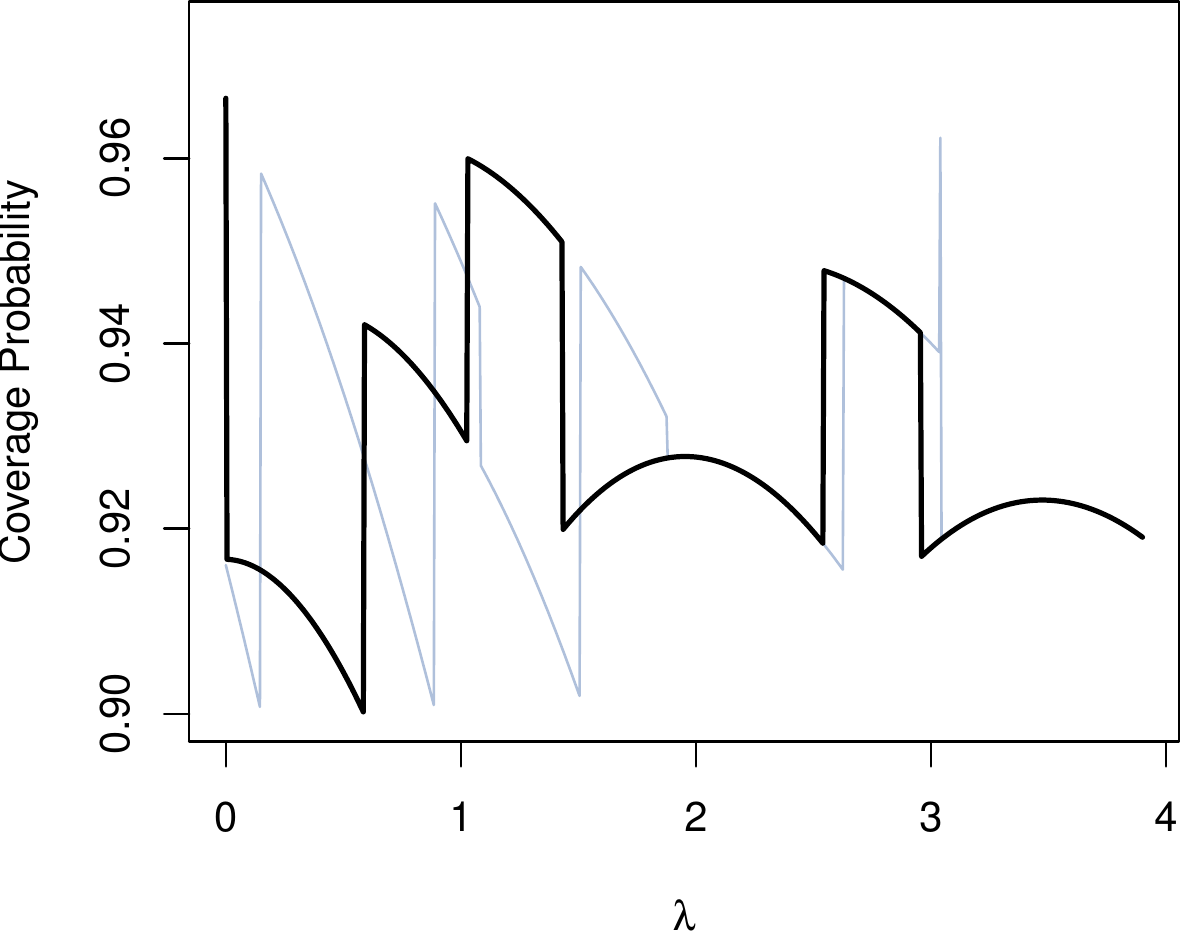}}}
\subfigure[EB--SB vs.~Giunti99]{\scalebox{0.62}{\includegraphics{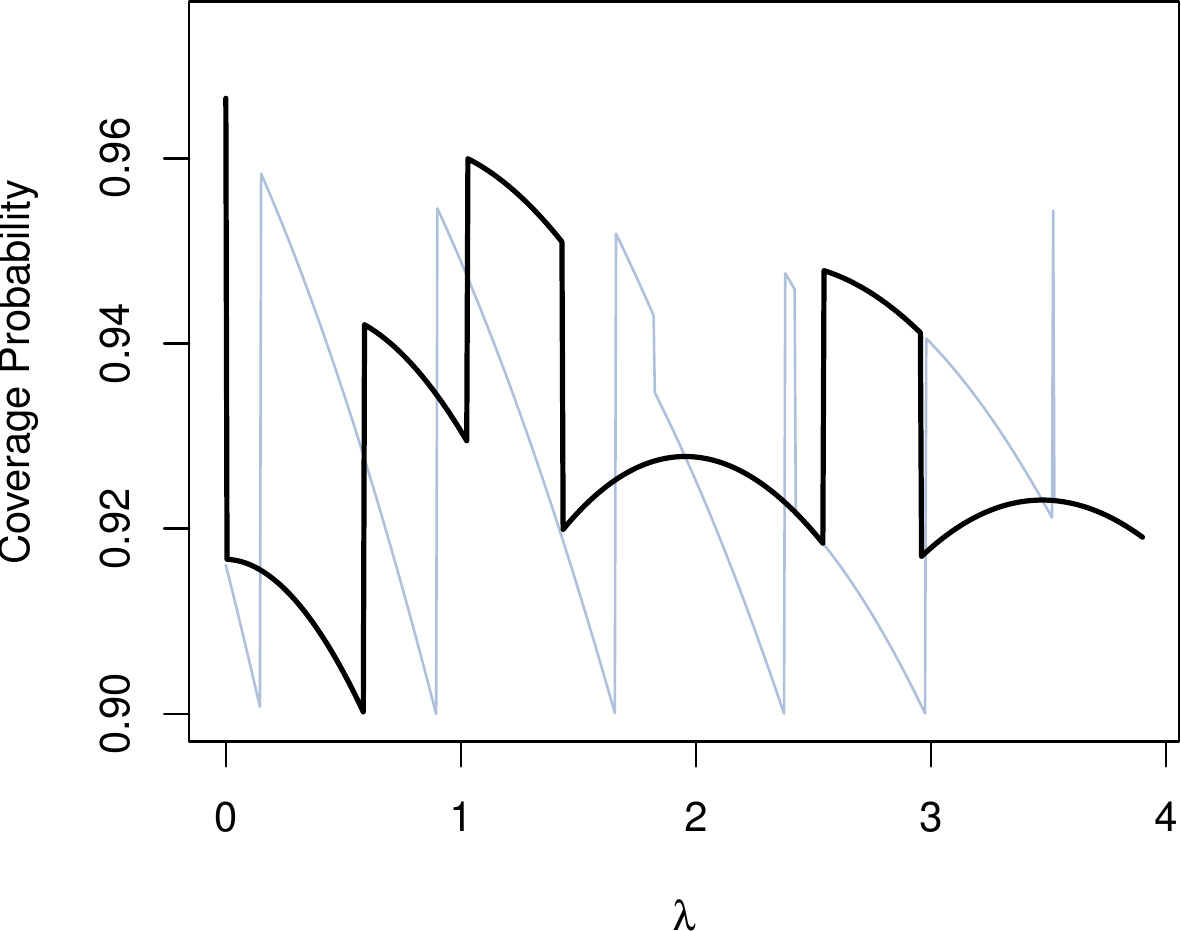}}}
\subfigure[EB--SB vs.~ELL12]{\scalebox{0.62}{\includegraphics{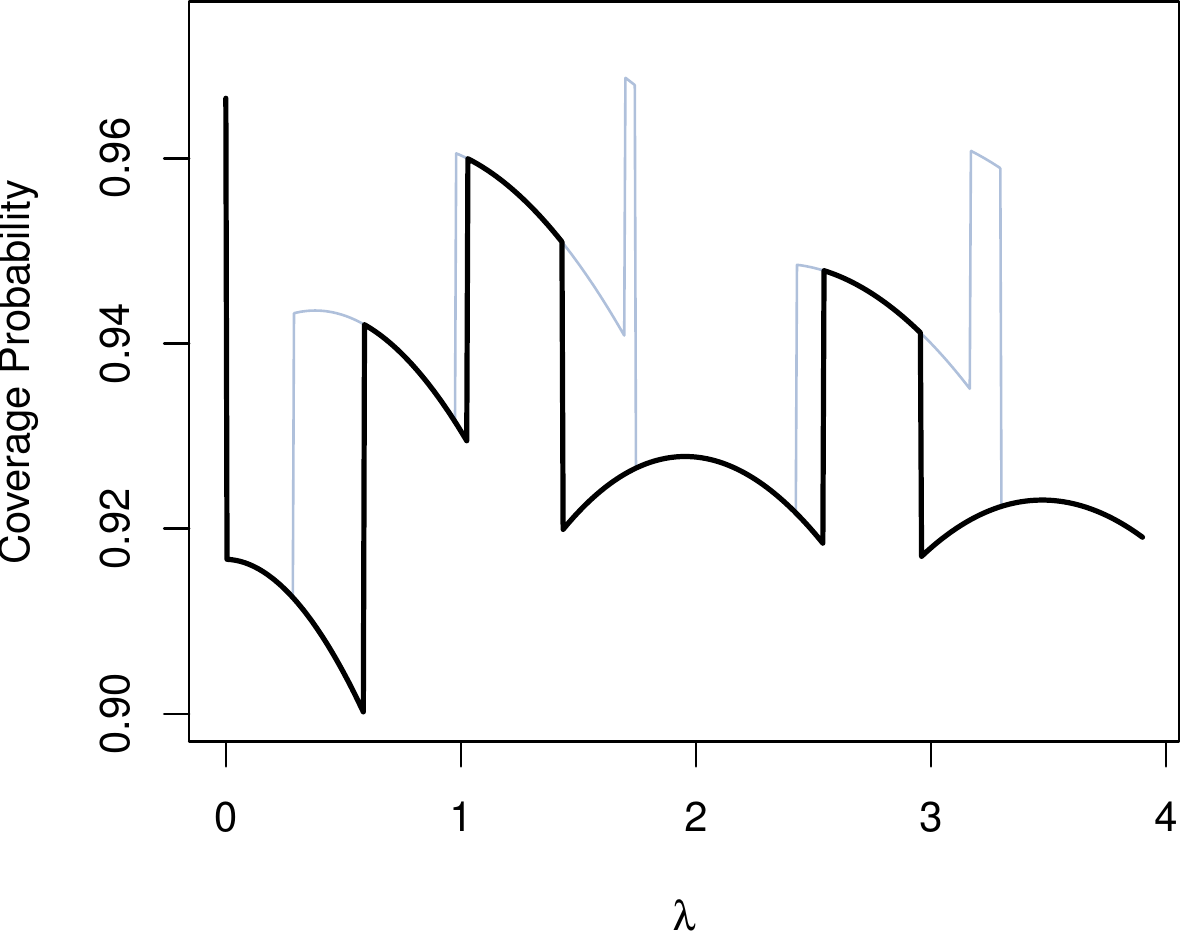}}}
\subfigure[EB--SB vs.~MS00b]{\scalebox{0.62}{\includegraphics{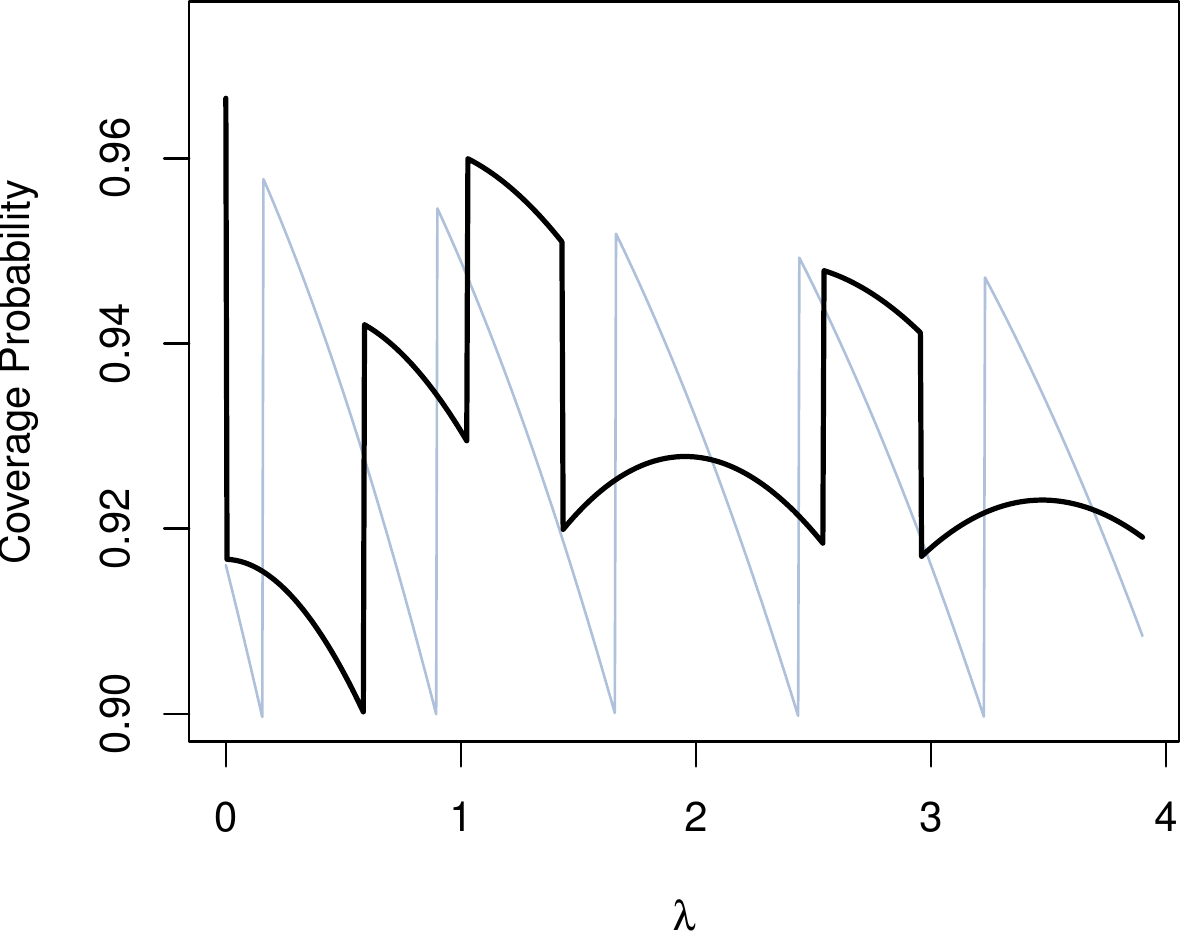}}}
\subfigure[EB--SB vs.~RW00]{\scalebox{0.62}{\includegraphics{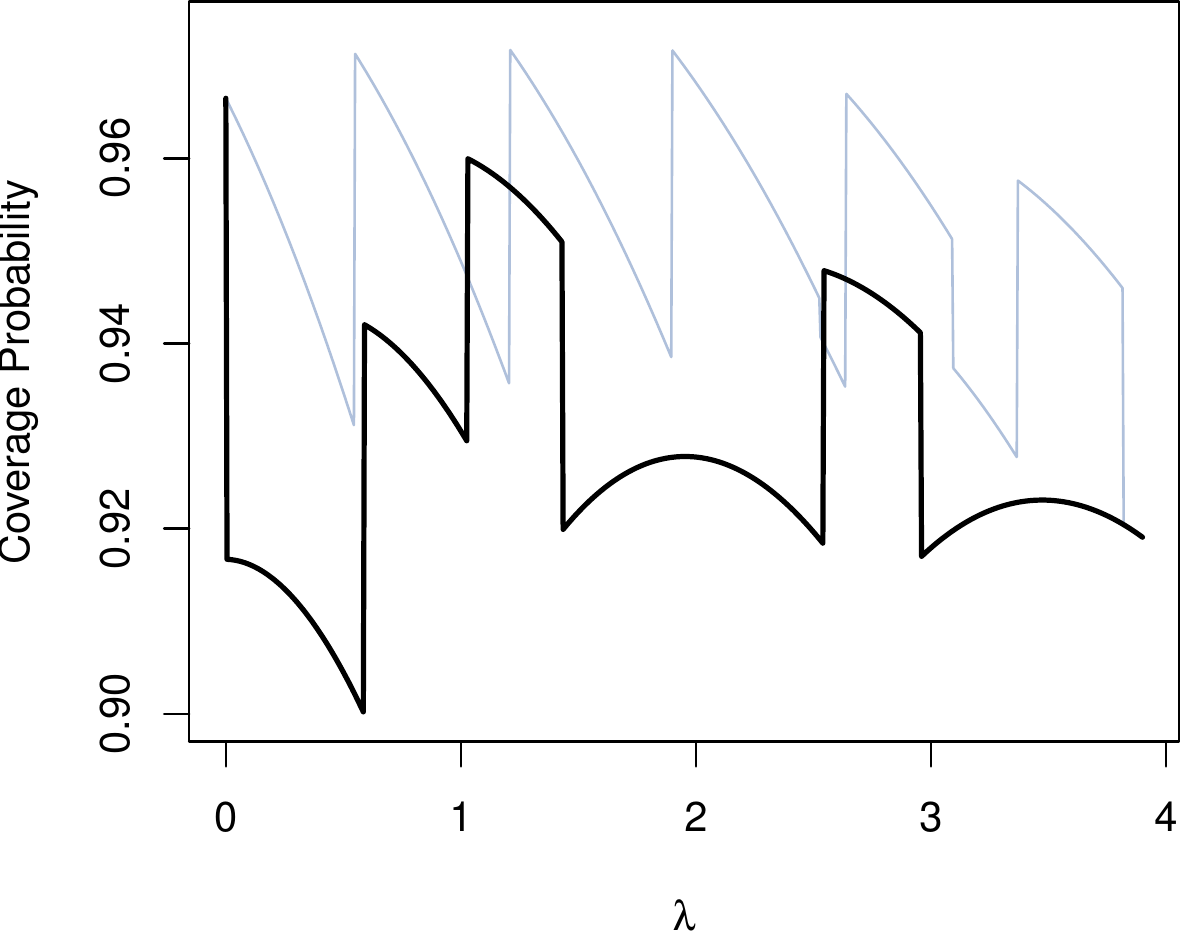}}}
\subfigure[EB--SB vs.~RW99+MS00a]{\scalebox{0.62}{\includegraphics{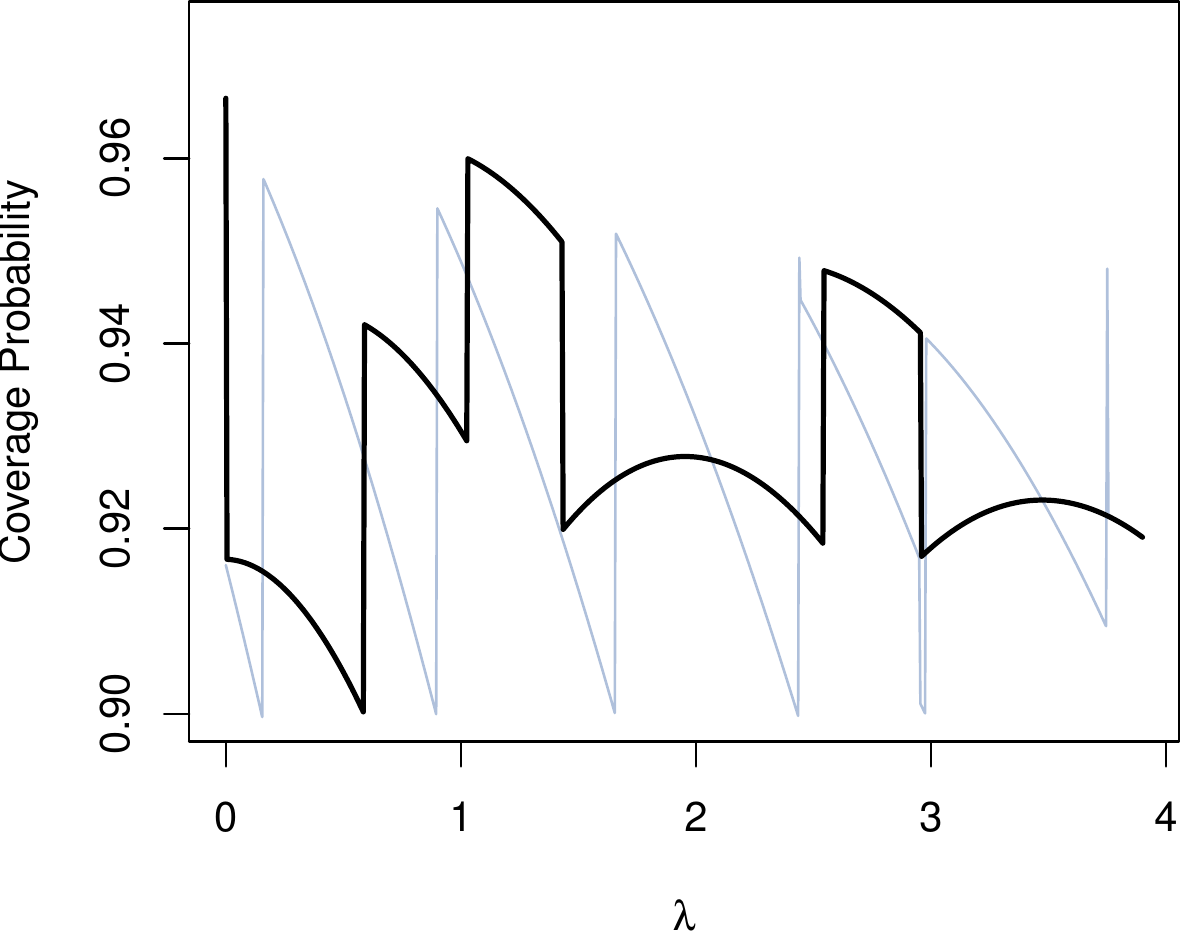}}}
\caption{Coverage probabilities comparisons for the nominal 90\% EB--SB plausibility intervals (black) against various confidence intervals (gray) for $\lambda \in [0,4]$, with $\beta=3$.}
\label{fig:coverage}
\end{center}
\end{figure}

\begin{figure}[htbp]
\begin{center}
\includegraphics[scale=0.75]{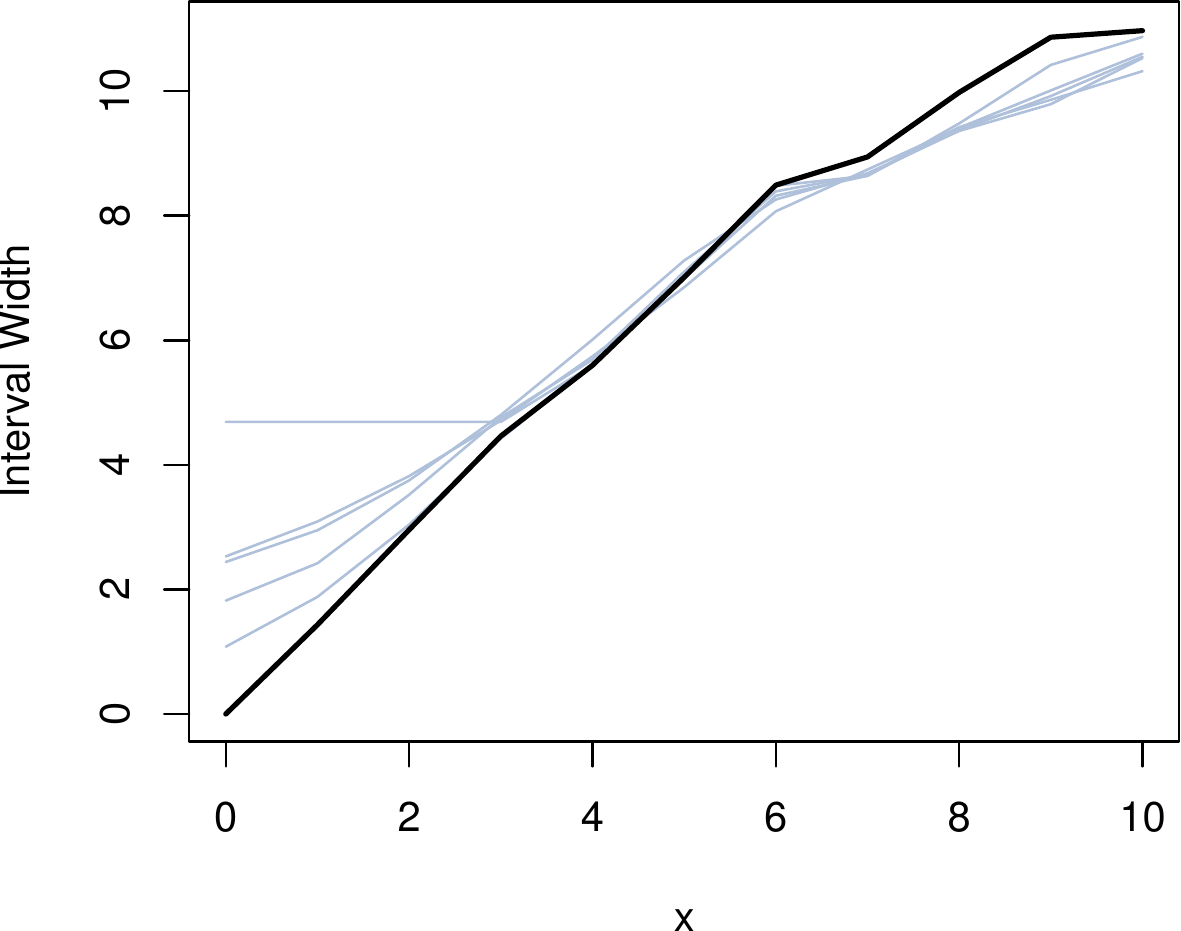}
\caption{Width of the various nominal 90\% plausibility/confidence intervals for $\lambda$, with $\beta=3$, as a function of data $x$: EB--SB (black); all others, except ELL12, (gray).}
\label{fig:intervalWidths}
\end{center}
\end{figure}

% Table of endpoints for score balanced vs. equi-tailed from IJAR with b=3

\section{Discussion}
\label{S:discuss}

Inference on a Poisson mean is an important and challenging problem, arising both classically and in modern applications.  Here we have developed a new theoretical and computational approach for optimal inference in this problem.  The main contribution is our construction of an (approximately) optimal predictive random set via a novel recursive ordering algorithm.  We also developed the EB--SB method to handle the more challenging problem of inference about a a Poisson mean when non-stochastic constraint information is available, which may be useful to high-energy physicists working on applications in this area.  Also, the techniques described herein are, for the most part, not special to the Poisson problem.  So, other challenging discrete data problems (e.g., binomial) can be handled similarly, and we expect that the corresponding optimal IM will outperform existing methods there as well.  

Numerical results focused primarily on comparing various methods in terms of frequentist performance.  But we want to reiterate once more that IMs, and the belief and plausibility functions derived from them, are more than just tools for developing frequentist procedures.  Indeed, IMs can be used to produce prior-free posterior probabilistic summaries of evidence in observed data \emph{for} and \emph{against} any assertion about the parameter of interest.  Moreover, this inferential output is meaningful both within and across experiments in the sense described in Section~\ref{S:intro}.  It is especially important that these claims hold even for singleton assertions/point null hypotheses, problems of extreme scientific importance for which existing approaches, in general, cannot give satisfactory probabilistic assessments of uncertainty.   

From a philosophical point of view, the IM framework, in general, helps tie together a number of elusive topics.  First, it identifies and corrects the inherent selection bias in Fisher's fiducial probabilities.  Roughly speaking, the fiducial probability for an assertion involves a $\prob_U$-probability calculation on a data-dependent event in $\UU$, and these probabilities tend to be too large for validity to hold.  By choosing an admissible predictive random set, the corresponding belief probability is shrunk down enough for validity to be achieved, thereby correcting the fiducial bias.  Second, by making an optimal choice of IM, the corresponding plausibility function at $A$ can be shown to equal Fisher's p-value for $H_0: \theta \in A$.  There is well-documented difficulty in interpretation of p-values, i.e., they are not bona fide probabilities for the truthfulness of $H_0$ because they require conditioning on $\theta \in A$, etc.  However, it can be shown that there exists a meaningful IM with the Fisher p-value equal to the easy-to-interpret plausibility for the truth of the claim ``$\theta \in A$''---no conditioning on the truthfulness of the claim is needed.

\section*{Acknowledgments}

This work is partially supported by the U.S.~National Science Foundation, grants DMS--1007678, DMS--1208833, and DMS--1208841.

\bibliographystyle{/Users/rgmartin/Research/TexStuff/asa}
\bibliography{/Users/rgmartin/Research/mybib}

\end{document}